\newif\ifIEEE
\newif\ifACM
\newlength\shadedboxwidth
\author{Grzegorz Głuch, Jerzy Marcinkowski, Piotr Ostropolski-Nalewaja\\
Institute of Computer Science, University of Wrocław} 
\title{Can One Escape Red Chains?\\ Regular Path Queries Determinacy is Undecidable.}
\date{January 2018}
\newtheorem{theorem}{Theorem}
\newtheorem{lemma}[theorem]{Lemma}
\newtheorem{definition}[theorem]{Definition}
\numberwithin{theorem}{section}
\newtheorem{observation}[theorem]{Observation}
\newtheorem{notation}[theorem]{Notation}
\newtheorem{exercise}[theorem]{Exercise}
\begin{document}

\maketitle
\newcommand{\DOM}[1]{\textit{Dom}(\mathbb{#1})}
\newcommand{\ddd}[0]{\ldots}
\newcommand{\rtgd}[2]{#1 \rightarrow #2}
\newcommand{\rcrg}[1]{\rc{#1}{R}{G}}
\newcommand{\rcgr}[1]{\rc{#1}{G}{R}}
\newcommand{\rc}[3]{\rtgd{#2(#1)}{#3(#1)}}

\newcommand{\step}[0]{step}

\newcommand{\Path}[2]{#1(#2)}
\newcommand{\barepath}[1]{\mathcal{P}_{#1}}
\newcommand{\requests}[0]{rq}
\newcommand{\database}[0]{\mathbb{D}}
\newcommand{\history}[0]{\mathbb{H}}
\newcommand{\histories}[0]{\Omega}
\newcommand{\pair}[1]{\langle #1 \rangle}
\newcommand{\set}[1]{\{#1\}}
\newcommand{\naturals}{\mathbb{N}}
\newcommand{\chase}[1]{\textit{Chase}(#1)}

\newcommand{\CC}[5]{(\tensor*[^{#5}_{#4}]{\mathbf{#1}}{^{#3}_{#2}})}
\newcommand{\ver}[0]{V}
\newcommand{\hor}[0]{H}

\newcommand{\layer}{\mathbb{L}}

\ifACM
\DeclareRobustCommand*\cal{\@fontswitch\relax\mathcal}
\fi

\newcommand{\outline}[1]{
\noindent
\underline{~~~~~~~~~~~~~~~~~~~~~~~~~~~~~~~~~~~~~~~~~~~~~~~~~~~~~~~~~~~~~~~~~~~~~~~}\vspace{-1mm}\\
\noindent
\underline{~~~~~~~~~~~~~~~~~~~~~~~~~~~~~~{\sc outline}~~~~~~~~~~~~~~~~~~~~~~~~~~~~~~}\\

\vspace{-2mm} #1 \vspace{-2mm}

\noindent
\underline{~~~~~~~~~~~~~~~~~~~~~~~~~~~~~~~~~~~~~~~~~~~~~~~~~~~~~~~~~~~~~~~~~~~~~~~}\vspace{-5mm}\\
\noindent
\underline{~~~~~~~~~~~~~~~~~~~~~~~~~~~~~~~~~~~~~~~~~~~~~~~~~~~~~~~~~~~~~~~~~~~~~~~}\\
}
  
  \vspace{-1.5mm}
  \noindent
{\bf Abstract.}
For a given set of queries (which are expressions in some query language) $\mathcal{Q}=\{Q_1$, $Q_2, \ldots Q_k\}$ and for another query $Q_0$ we say that 
 $\mathcal{Q}$ determines $Q_0$ if -- informally speaking --  for every database $\mathbb D$, the information 
 contained in the views $\mathcal{Q}({\mathbb D})$ is sufficient to compute $Q_0({\mathbb D})$.
 
 Query Determinacy Problem is the problem of deciding, for given  $\mathcal{Q}$ and $Q_0$, whether $\mathcal{Q}$ determines $Q_0$.
 Many versions of this problem, for different query languages, were studied in database theory. In this paper we 
 solve a problem stated in [CGLV02] 
 and show that 
 Query Determinacy Problem is undecidable for the Regular Path Queries -- the paradigmatic query language of graph databases.
 
\vspace{-0.5mm}
 
\section{Introduction}

\noindent
{\bf Query determinacy problem (QDP).} Imagine there is a database $\database$
we have no direct access to, and there are views of this $\database$ available to us, defined by some set of
 queries $\mathcal{Q}=\{Q_1$, $Q_2, \ldots Q_k\}$ (where the language of queries from $\mathcal{Q}$ is a parameter of the problem).
 And we are given another  query $Q_0$. Will we be able, regardless of $\mathbb D$,
 to compute $Q_0(\mathbb D)$
only using the views   $Q_1(\mathbb D), Q_2(\mathbb D),\ldots Q_k(\mathbb D) $? The answer depends on whether the queries in $\mathcal {Q}$ 
{\em determine\footnote{
Or, using the language of [CGLV00], [CGLV00a] [CGLV02] and [CGLV02a], whether $\mathcal{Q}$ are lossless with respect to $Q_0$.  }} query $Q_0$. 
Stating it more precisely, the  {\bf Query Determinacy Problem   is}\footnote{More precisely, the problem
  comes in two different flavors, ``finite'' and ``unrestricted'', depending on whether the ($\clubsuit$) ``each'' ranges over finite structures only, or all structures, including infinite.}:
\vspace{1.5mm}

\noindent\fbox{%
    \parbox{\linewidth}{%
  The instance of the problem is a set of  queries $\mathcal{Q}=\{Q_1,\ldots Q_k\}$, and 
another  query $Q_0$.

 The question is whether $\mathcal{Q}$ determines $Q_0$, which means that for ($\clubsuit$) each
 two structures (database instances) ${\mathbb D}_1$ and ${\mathbb D}_2$ such that  
$Q({\mathbb D}_1)= Q({\mathbb D}_2)$ for each $Q\in \mathcal{Q}$, it also holds that  $Q_0({\mathbb D}_1) = Q_0({\mathbb D}_2)$.}%
}
\vspace{1.5mm}

\noindent
QDP is seen as a very natural problem in the area of database theory, with a 30 years long history as a research subject   --
see for example 
[H01], or Nadime Francis thesis [F15] for a survey. In [DPT99] QDP naturally appears in the context of query evaluation plans optimization. 
More recent examples are [FG12], where the context for QDP is the view update problem or 
 [FKN13], where the context is description logics.  In the above examples
 the goal  is  optimization/efficiency so 
 we ``prefer'' $Q_0$ to be determined by $\mathcal{Q}$. 
 Another context, where it is ``preferred'' that $Q_0$ is not determined, 
 is privacy: we would like to release some views of the database, but in a way that does not allow certain query to be computed.

The oldest paper we were able to trace, where QDP is studied, is [LY85].
Over the next 30 years 
many decidable and undecidable cases have been identified. Let us
just cite some more recent results: [NSV10] shows that the problem is decidable for conjunctive queries 
if each  query from $\mathcal{Q}$ has only one free variable; in 
[A11] decidability is shown for $\mathcal{Q}$ and $Q_0$ being ''conjunctive path queries''. 
This is generalized in [P11]  to the the scenario where $\mathcal{Q}$ are conjunctive path queries but $Q_0$ is any conjunctive
query. 

The paper  [NSV06] was the first to 
present a negative result. QDP was shown there to be undecidable if unions of conjunctive queries are allowed in  $\mathcal{Q}$ and $Q_0$. 
 In [NSV10] it was proved that determinacy is also undecidable if the elements of $\mathcal{Q}$ are conjunctive queries  and $Q_0$ is a first order sentence 
(or the other way round). Another  negative result is presented in [FGZ12]: determinacy is shown there to be undecidable  if  $\mathcal{Q}$  is a DATALOG program and 
$Q_0$ is a conjunctive query. Finally,  closing the classification for the traditional relational model, it was shown in [GM15] and [GM16] that QDP
is undecidable for $Q_0$ and the queries in $\mathcal{Q}$ being conjunctive queries.\\[1mm]
{\bf QDP for Regular Path Queries.} 
While the determinacy problem is now well understood for the pure relational model\footnote{Apparently, when talking about the 
relational model, there may  still be some work to do concerning QDP in the context of bag semantics, see [GB14].}, 
it has been, for a long time, open  for the  graph databases scenario.
In 
this scenario, the underlying data is  modeled as graphs, in which nodes
are objects, and edge labels define relationships between those objects.
Querying  such graph-structured data has received much attention recently, due
to numerous applications, especially for the social networks.

 There are many more or less expressive query languages for  such databases (see [B13]). The core of all of them (the SQL of graph databases) is
  RPQ -- the language of Regular Path Queries. RPQ queries
ask for all pairs of objects in the database that are connected by a specified
path, where the natural choice of the path specification language, as [V16] elegantly explains, is the language of regular expressions.
This idea is at least 30 years old (see for example [CMW87, CM90]) and considerable effort was put to
create tools for  reasoning about regular path queries, analogous to the ones we have in the 
 traditional relational databases  context.
For example [AV97] and  [BFW98] investigate  decidability of the
implication problem for path constraints, which are integrity constraints used for  RPQ optimization. Also, containment of
conjunctions of regular path queries has been addressed and proved decidable in
[CDGL98] and  [FLS98], and then, in more general setting, in [JV09] and [RRV15]

It is natural that also query determinacy problem has been stated, and studied,
for Regular Path Queries model. This line of
research was initiated in 
[CGLV00], [CGLV00a] \hfill \break [CGLV02] and [CGLV02a], 
and it was [CGLV02] where the central problem of this area -- decidability of QDP for RPQ was first stated (called there ``losslessness for exact semantics'')

A method for
computing a rewriting of a regular path query  in terms of other regular
expressions (if such rewriting exists) \footnote{existence of rewriting is a related property to determinacy, but stronger} 
is shown in [CGLV02]. And it is proven  that it is 2ExpSpace-complete to decide whether
there exists a rewriting of the query  that can be expressed as a regular
path query. Then a notion of monotone determinacy is defined, meaning that not only  $Q_0({\mathbb D})$ is a function\footnote {$\mathbb D$ is an argument here. 
Saying that ``$Q_0(\mathbb{D})$ is a function $\mathcal{Q}({\mathbb D})$'' is equivalent to saying that ${\mathcal Q}$ determines $Q_0$.} of ${\mathcal Q}({\mathbb D})$ but this function is also monotone --
the greater  ${\mathcal Q}({\mathbb D})$ (in the inclusion ordering) the greater $Q_0({\mathbb D})$, and it is shown that monotone determinacy is decidable 
in ExpSpace. This  proves that monotone
determinacy,
which is -- like rewritability -- also a notion related to determinacy but stronger, 
does not coincide with the existence of a regular path rewriting, which is  2ExpSpace-complete (while of course the existence of rewriting implies monotonicity).
This proof is indirect and it is interesting that a specific example separating monotone determinacy and rewritability has only been shown in 
[FSS14].
However, [CGLV02a] also provides an example where a regular
path view determines a regular path query in a non-monotone way  showing  that, in this setting, determinacy does not coincide with
monotone determinacy.

 In [CGLV02], apart from the standard QDP, the authors consider the so called ``losslessness under sound semantics''. They show that computing ``certain answers'' (under this semantics) 
 of a regular path query with
respect to a regular path view reduces to the satisfiability of (the negation of)  uniform
CSP (constraint satisfaction problem). Building on this connection and on the known links between CSP and Datalog [FV98],
they show how to compute approximations of this CSP in Datalog.
This is studied in more detail in [FSS14] and a surprising result is proved, that 
when a regular path view determines a regular path query in a monotone way, then one of the approximations is exact.

But, despite the considerable body of work in the area around the the main problem,  little was so far known about the problem of decidability of QDP for RPQ itself.
On the positive side, the previously mentioned result of Afrati [A11] can be seen as a special case, where each of the regular 
languages (defining the queries) only consists of one word (path queries, considered in [A11] constitute in fact the intersection of CQ and RPQ). 
Another positive result is presented in [F17], where ``approximate determinacy'' is shown to be decidable 
if the query $Q_0$ is (defined by) a single-word regular language, and the languages defining the queries in $Q_0$ and  $\mathcal Q$ are  over a single-letter alphabet.
The failure to solve the problem completely even for this very simple variant shows how complicated things very quickly become. But it is the analysis which is so obviously hard (not QDP itself as a computational problem) and  
it is not immediately clear how QDP for RPQ  could be used to encode anything within. In consequence, no lower bounds have been  known so far, except of 
a simple one from [F15], where undecidability is shown if $Q_0$ can be context-free rather than just regular.

\vspace{2mm}
\noindent
{\bf Our contribution.} 
The main result of this paper is:

\vspace{-1mm}
\begin{theorem}\label{i1}
QDP-RPQ, the Query Determinacy Problem for Regular Path Queries, is undecidable.
\end{theorem}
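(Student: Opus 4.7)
The plan is to reduce an undecidable problem---most naturally an unbounded tiling problem, or equivalently the halting problem for a deterministic Turing machine---to QDP-RPQ. Given an instance $I$ of such a problem, I would construct a finite set $\mathcal Q$ of regular path queries and a single target RPQ $Q_0$ such that $I$ has a solution if and only if $\mathcal Q$ fails to determine $Q_0$. The reduction must produce the set of queries uniformly and in a computable way from $I$; undecidability of QDP-RPQ then follows.

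The first, and most conceptual, step is to design the alphabet and the views in $\mathcal Q$ so that any database satisfying the consistency they impose is forced to contain, as a substructure, an encoding of a (partial) computation or tiling. The ``red chains'' of the title suggest the following picture: two distinguished colours $R$ and $G$ label edges of the database, and the queries in $\mathcal Q$ enforce that wherever one colour traces out a path of some prescribed regular shape, the other colour must trace out a matching path between the same endpoints, in the spirit of the implications $\rcrg{x}$ and $\rcgr{x}$. A counterexample pair $\database_1,\database_2$ to determinacy must then agree on every such matched path while disagreeing on the single existential path sought by $Q_0$. The crux is to make this possible exactly when a valid solution exists: a canonical ``honest'' $\database_1$ is a direct encoding of the solution, and $\database_2$ is an almost identical copy in which a single red chain is re-routed so that the green $Q_0$-witnessing path is broken, without disturbing any view answer.

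For the soundness direction (solution implies non-determinacy) the encoding delivers the witnessing pair $\database_1,\database_2$ essentially by construction. The main obstacle is completeness: from any two databases that defeat determinacy one must extract a genuine solution of $I$. The standard route is to replace $\database_1$ and $\database_2$ by ``universal'' chase-like representatives of their equivalence classes under the views, and then to argue combinatorially that the two representatives can be aligned along matched red-green paths so that the mismatch propagates, forcing an infinite grid-like consistent structure corresponding to a full computation or tiling. Handling the case where $\database_1,\database_2$ are finite but the extracted object is infinite is precisely the ``finite vs.\ unrestricted'' subtlety flagged in the introduction, and a variant of the red-chain gadget that bounds, or does not bound, how far the mismatch may travel is likely what lets both flavours of the problem be treated.

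The delicate point---and presumably the reason earlier attempts did not succeed---is that RPQs are a very weak formalism: the views are one-way, purely existential, and each can only detect reachability along a regular language. They must nonetheless pin down enough of the two-dimensional structure of a computation that completeness goes through, while leaving enough slack that a mismatched $\database_2$ stays view-equivalent to $\database_1$. The red/green-chain gadget is the device that threads this needle; verifying its two required properties---that it enforces enough local structure and that this structure is rich enough to simulate an undecidable problem---will be the bulk of the technical work, and is the step I expect to be genuinely hard.
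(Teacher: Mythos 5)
Your high-level skeleton does match the paper's actual architecture more closely than you might expect: the paper also reduces from a tiling problem, also doubles the alphabet into red and green copies with constraints $G(L)\rightarrow R(L)$ and $R(L)\rightarrow G(L)$ for each view $L\in\mathcal{Q}$, characterizes non-determinacy by a single two-colored counterexample structure (equivalent to your two-database picture), proves soundness by exhibiting that structure explicitly from a tiling solution, and handles completeness by a chase-like universality argument -- formalized there as a solitaire game (``Escape'') together with a bespoke universality lemma, since the classical chase is unavailable when constraint heads are infinite regular languages. However, two genuine gaps separate your plan from a proof. First, you never explain how the reduction controls the $Q_0$-witness itself, i.e.\ where the ``seed'' of the simulation comes from and why a counterexample cannot witness $Q_0$ by some degenerate path unrelated to any computation. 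This is the crux the paper spends a whole section on: $Q_0$ must simultaneously act as the set of legal initial configurations and as the cheat-detector, and the resolution is to set $Q_0 = Q_{start} + \sum_{L\in\mathcal{Q}_{bad}\cup\mathcal{Q}_{ugly}} L$ while also putting every forbidden-pattern language of $\mathcal{Q}_{bad}\cup\mathcal{Q}_{ugly}$ into the view set $\mathcal{Q}$, so that any occurrence of a forbidden pattern, in either color, is forcibly copied to red and immediately satisfies $R(Q_0)(a,b)$, killing the counterexample. Without a mechanism of this kind your gadget's ``two required properties'' cannot both hold, and you correctly flag this verification as the hard part -- but it is precisely the part your proposal leaves blank.

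Second, your completeness argument aims to extract an \emph{infinite} grid-like structure from a view-equivalent pair, which diverges from the paper in a way that matters. In the paper the forced structure is \emph{finite}: the player's initial move must be a green path labeled $\alpha[\,\cdot\,]^{m}\omega \in Q_{start}$, whose length fixes a grid size $m$, and the game then deterministically terminates in the finite two-colored grid $\mathbb{G}_m$ with no requests left; the source problem is correspondingly a finitary tiling question (does \emph{some} $m\times m$ grid with corner and adjacency constraints exist), which is r.e.-complete. This finiteness is not incidental -- it is what makes the positive instances yield \emph{finite} counterexamples, hence non-determinacy in both the finite and unrestricted flavors simultaneously, giving the stated co-r.e.-hardness for both. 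An infinite-grid-forcing reduction of the kind you sketch would at best address the unrestricted variant (a finite counterexample cannot contain an infinite grid, so it says nothing about finite determinacy), and your suggestion of ``a variant of the gadget that bounds, or does not bound, how far the mismatch may travel'' is speculation rather than a construction. The fix is the paper's pumping trick in $Q_{start}$: let the adversary choose the finite size up front, then prove the play from that seed is completely forced.
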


\vspace{-1mm}
To be more precise, we show that the problem, both in the ``finite'' and 
the ``unrestricted'' version, is co-r.e.-hard, which means that if we take, as an input to our encoding, a Turing machine which accepts (the empty input) then, as the result of the encoding
we get a negative instance of QDP (``no determinacy''), and if we begin from a non-accepting machine then the resulting instance is positive. Notice that this gives the precise 
bound on the complexity of the ``finite'' version of QDP for RPQ -- it is easy to see that finite non-determinacy is recursively enumerable. But there is no such upper bound for the ``unrestricted'' case, and
we are not sure what the precise complexity can be. We believe that the problem may be harder than co-r.e.-complete. 

Regarding the technique we use: clearly we were tempted to save as much as possible from the techniques of [GM15] and [GM16]. 
But hardly anything survived in the new situation (one exception is that the idea of the green-red Chase from [G15]  evolved into the notion of Escape here).
The two important constructions in [GM15] and [GM16] used queries with high number of free variables (this is where states of the Turing machine are encoded, in the form of 
spiders with fancy colorings) and queries which can be homomorphically, non-trivially, mapped into themselves -- this is how the original small structure (``green spider'' in 
[GM15] and [GM16] or (green) ${\mathbb D}_0$ in this paper) could grow. None of the mechanisms is available in the current context, so in principle the whole proof was built from scratch.




\vspace{2mm}
\noindent
{\bf Remark.} [B13] makes a distinction between ``simple paths semantics'' for Recursive Path Queries and ``all paths semantics''. As all the graphs we produce in this paper are acyclic (DAGs), 
all our results hold for both semantics.

\vspace{2mm}
\noindent
{\bf Organization of the paper}
The rest of this paper is devoted to the proof of Theorem \ref{i1}. In short Section \ref{preliminaries} we introduce the (very few) notions and some notations we need to use. 

In Section \ref{sygnatura} we first follow the ideas from [GM15] defining red-green signature. Then we 
define the game of Escape and state a crucial lemma (Lemma \ref{bridge}), asserting that this game really fully characterizes determinacy for Recursive Path Queries. In Section \ref{uniwersalnosc} we prove this Lemma.

At this point we will have all the tools ready for proving Theorem \ref{i1}. In Section \ref{redukcja} we
explain what is the undecidable problem we use for our reduction, and present the reduction.
In Sections \ref{przewodnik} -- \ref{ulga} we use the characterization provided by Lemma \ref{bridge} to prove correctness of this reduction.


\section{Preliminaries}\label{preliminaries}


\noindent{\bf Structures.}
When we say ``structure" we always mean a directed graph with edges labeled with letters from some signature/alphabet $\Sigma$. In other words every structure we consider is relational structure $\database$ over some signature $\Sigma$ consisting of binary predicate names. Letters $\mathbb{D}$, $\mathbb{M}$, $\mathbb{G}$ and $\mathbb{H}$ are used to denote structures. $\Omega$ is used for a set of structures.


For two structures ${\mathbb G}$ and ${\mathbb G'}$ over $\Sigma$, with sets of vertices $V$ and $V'$, a function $h:V \rightarrow V'$ is (as always) called a 
homomorphism if for each two vertices $\pair{x,y}$ connected by an edge with label $E\in \Sigma$ in $\mathbb{G}$ there is an edge connecting $\pair{h(x),h(y)}$, with the same label $E$, in $\mathbb{G'}$.

\vspace{2mm}
\noindent{\bf Chains and chain queries.}
Given a set of binary predicate names $\Sigma$ and a word $w = a_1a_2\ddd a_n$ over $\Sigma^*$ we 
define a chain query $ w(x_0, x_n)$ as a conjunctive query:

\begin{center}
$\exists_{x_1,\ddd,x_{n-1}} a_1(x_0, x_1) \wedge  a_2(x_1, x_2) \wedge \ddd a_n(x_{n-1}, x_n).$
\end{center}

We use the notation $w[x_0,x_n]$ to denote the canonical structure (``frozen body'') of query $w(x_0, x_n)$ --
the structure consisting of elements $x_0,x_1,\ldots x_n$ and atoms $a_1(x_0, x_1), \\ a_2(x_1, x_2),\ldots$ $a_n(x_{n-1}, x_n)$.

\vspace{2mm}
\noindent{\bf Regular path queries.}
For a regular language $Q$ over $\Sigma$ we define a query, which is also denoted by $Q$, as:

\begin{center}
$Q(x, y) = \exists_{w \in Q} w(x, y)$
\end{center}

In other words such a query $Q$ looks for a path in the given graph labeled with any word from $Q$ and returns the endpoints of that path.

We use letters $Q$ and $L$ to denote regular languages and $\mathcal{Q}$ and $\mathcal{L}$ to denote sets of regular languages. 
The notation $Q(\database)$ has the natural meaning of: $Q(\database) = \set{\pair{x,y}\, |\, \database \models Q(x,y)}$.



\section{Red-Green Structures and Escape}\label{sygnatura}

\subsection{Red-green signature and Regular Constraints}

For a given alphabet (signature) $\Sigma$ let $\Sigma_G$ and $\Sigma_R$ be two copies of $\Sigma$ one written with "green ink" and another with "red ink". Let $\bar\Sigma = \Sigma_G \cup \Sigma_R$.

For any word $w$ from $\Sigma^*$ let $G(w)$ and $R(w)$ be copies of this word written in green and red respectively. For a regular language $L$ over $\Sigma$ let $G(L)$ and $R(L)$ be copies of this same regular language but over $\Sigma_G$ and $\Sigma_R$ respectively.
Also for any structure $\database$ over $\Sigma$ let $G(\database)$ and $R(\database)$ be copies of this same structure $\database$ but with labels of edges recolored to green and red respectively.

For a pair of regular languages $L$ over $\Sigma$ and $L'$ over $\Sigma'$ we define \textit{Regular Constraint} $\rtgd{L}{L'}$ as a formula

\begin{center}
$\forall_{x,y} L(x,y) \Rightarrow L'(x,y).$
\end{center}

We use the notation $\database \models r$ to say that an RC $r$ is satisfied in $\database$. Also, we write $\database \models T$ for a set $T$ of RCs when for each $t\in T$ it is true that $\database \models t$.

For a graph $\database$ and an RC $t=\rtgd{L}{L'}$ let $\requests(t, \database)$ (as ``requests'') be the set of all triples $\pair{x,y,\rtgd{L}{L'}}$ such that $\database \models L(x,y)$ and $\database \not\models L'(x,y)$. For a set $T$ of RCs by 
$\requests(T, \database)$ we mean the union of all sets $\requests(t, \database)$ such that 
$t \in T$. Requests are there in order to be satisfied:

\vspace{-2mm}
\begin{algorithm}[H]
\begin{algorithmic}[1]
\Statex \textbf{function} \textsc{Add}
\Statex \textbf{arguments}:
\Statex \begin{itemize}
			\item Structure $\database$
            \item RC $\rtgd{L}{L'}$
			\item pair $\pair{x,y}$ such that $\pair{x,y,\rtgd{L}{L'}} \in \requests(\rtgd{L}{L'}, \database)$
		\end{itemize}
\Statex \textbf{body}:
\State Take a word $w = a_0 a_1 \ddd a_n$ from $L'$ and create a new path $w[x,y] = a_0(x, x_1),a_1(x_1,x_2),\ddd,a_n(x_{n-1},y)$ where $x_1,x_2,\ddd,x_{n-1}$ are {\bf new} vertices
\State \Return $\database \cup w[x,y]$.
\end{algorithmic}
\label{alg:local_improvements}
\end{algorithm}

\vspace{-2mm}
Notice that the result $Add(D, \rtgd{L}{L'}, \pair{x,y})$ depends on the choice of $w\in L'$. So the procedure is non-deterministic.

For a  regular language ${L}$ we define 
${L}^\rightarrow = \rcgr{L}$ and ${L}^\leftarrow = \rcrg{L}$. 
All regular constraints we are going to consider are either ${L}^\rightarrow$ or ${L}^\leftarrow$ for some regular $L$. 

For a  regular language $L$ we define
$L^\leftrightarrow=\{L^\rightarrow, L^\leftarrow\}$ and for a set $\mathcal{L}$ of regular languages we define:

$${\mathcal L}^\leftrightarrow = \bigcup_{L \in \mathcal{L}} L^\leftrightarrow.$$

Requests of the form $\langle x,y,t\rangle$ for some RC $t\in L^\rightarrow$
($t\in L^\leftarrow$)
are 
{\em generated by} $G(L)$ (resp. {\em by}  $R(L))$. Both groups jointly are said to be 
{\em generated by} $L$. 

The following lemma is straightforward to prove and characterizes determinacy in terms of regular constraints:

\begin{lemma}
\label{lm-det-struct}
A set $\mathcal{Q}$ of regular path queries over $\Sigma$ does not determine regular path query  $Q_0$, over the same alphabet,  if and only if there exists a structure $\mathbb M$ and a pair of vertices $a,b \in {\mathbb M}$ such that ${\mathbb M} \models \mathcal{Q}^\leftrightarrow$ and ${\mathbb M} \models {(G(Q_0))}(a,b)$ but   ${\mathbb M}\not\models {(R(Q_0))}(a,b)$.
\end{lemma}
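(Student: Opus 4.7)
The plan is to prove both directions by a direct construction that translates between a pair of databases $(\mathbb{D}_1,\mathbb{D}_2)$ witnessing non-determinacy and a single two-colored structure $\mathbb{M}$. The key observation is that the red and green signatures $\Sigma_R,\Sigma_G$ are disjoint, so a path in $\mathbb{M}$ labelled by a word $G(w)$ (respectively $R(w)$) only uses green (respectively red) edges, and therefore corresponds bijectively to a path labelled by $w$ in whichever monochromatic substructure we extract.

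For the $(\Rightarrow)$ direction, I start with databases $\mathbb{D}_1,\mathbb{D}_2$ such that $Q(\mathbb{D}_1)=Q(\mathbb{D}_2)$ for every $Q\in\mathcal{Q}$ but $Q_0(\mathbb{D}_1)\neq Q_0(\mathbb{D}_2)$. After renaming vertices we may assume $\mathbb{D}_1$ and $\mathbb{D}_2$ share a common vertex set $V$ (padding with isolated vertices if needed — this does not change any query answer). Define $\mathbb{M}$ on $V$ by taking the green copy of the edges of $\mathbb{D}_1$ together with the red copy of the edges of $\mathbb{D}_2$. Because green paths live entirely inside $G(\mathbb{D}_1)$ and red paths entirely inside $R(\mathbb{D}_2)$, we have $\mathbb{M}\models G(L)(x,y)\Leftrightarrow \mathbb{D}_1\models L(x,y)$ and $\mathbb{M}\models R(L)(x,y)\Leftrightarrow \mathbb{D}_2\models L(x,y)$ for every regular language $L$. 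For $L\in\mathcal{Q}$ the assumption $L(\mathbb{D}_1)=L(\mathbb{D}_2)$ then gives both $L^{\rightarrow}$ and $L^{\leftarrow}$, so $\mathbb{M}\models\mathcal{Q}^{\leftrightarrow}$. Finally, some pair $(a,b)$ is in $Q_0(\mathbb{D}_1)\setminus Q_0(\mathbb{D}_2)$ (swap the roles of $\mathbb{D}_1$ and $\mathbb{D}_2$ if needed), and the same translation yields $\mathbb{M}\models G(Q_0)(a,b)$ and $\mathbb{M}\not\models R(Q_0)(a,b)$.

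For the $(\Leftarrow)$ direction I run the construction in reverse. Given $\mathbb{M}$ with the stated properties, define $\mathbb{D}_1$ by keeping the vertex set of $\mathbb{M}$ together with exactly those edges labelled by letters of $\Sigma_G$, and then recolouring each $G(a)$ back to $a\in\Sigma$; define $\mathbb{D}_2$ analogously from the red edges. Exactly as above, $\mathbb{D}_i\models L(x,y)$ iff $\mathbb{M}\models G(L)(x,y)$ (resp. $R(L)(x,y)$). The regular constraints $L^{\rightarrow}$ and $L^{\leftarrow}$ for $L\in\mathcal{Q}$ then say precisely that every pair satisfying $L$ in $\mathbb{D}_1$ also satisfies it in $\mathbb{D}_2$ and conversely, so $L(\mathbb{D}_1)=L(\mathbb{D}_2)$. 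The last two conditions on $\mathbb{M}$ give $(a,b)\in Q_0(\mathbb{D}_1)\setminus Q_0(\mathbb{D}_2)$, showing non-determinacy.

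There is no serious obstacle here; the lemma is essentially a definitional reformulation, and the only point that requires a little care is ensuring that the vertex sets of $\mathbb{D}_1$ and $\mathbb{D}_2$ can be aligned in the forward direction without disturbing any of the query answers, which is handled by the isolated-vertex padding remark. The same argument works uniformly for the finite and unrestricted versions of the problem, since the construction preserves finiteness in both directions.
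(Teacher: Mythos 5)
Your proof is correct and is exactly the straightforward argument the paper has in mind: the paper states Lemma~\ref{lm-det-struct} without proof (explicitly calling it straightforward), and the intended translation is precisely your two-colored witness $G(\mathbb{D}_1)\cup R(\mathbb{D}_2)$ in one direction and the monochromatic decomposition of $\mathbb{M}$ in the other. The only point worth noting is that your claim that isolated-vertex padding ``does not change any query answer'' (and, in fact, the lemma itself) silently relies on the paper's implicit convention that the regular languages involved do not contain the empty word, since otherwise $Q(x,y)$ would hold on all diagonal pairs of the domain.
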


Any structure ${\mathbb M}$, as above, will be called \textit{counterexample}.

%

\subsection{The game of Escape}\label{game}
An instance Escape($Q_0$, $\mathcal{Q}$) of a solitary game called \textit{Escape}, played by a player called \textit{Fugitive}, is:
\begin{itemize}
\item a regular language $Q_0$ of {\em forbidden chains} over $\Sigma$.
\item a set of regular languages $\mathcal{Q}$ over $\Sigma$,
\end{itemize}

The rules of the game are:
\begin{itemize}
	\item First Fugitive picks the \textit{initial position} of the game as $\database_0 = (G(w))[a,b]$ for some $w \in Q_0$.

\item Suppose $\database_i$ is the position of the game after Fugitive move $i$ and $S_i=\requests(\mathcal{Q}^\leftrightarrow, \database_i)$. Then, in move $i+1$, Fugitive can move to any position of the form:
$$\database_{i+1} = \bigcup_{\pair{x,y,t} \in S_i} Add(\database_i, t, \pair{x,y})$$
    
    \item Fugitive loses when for a \textit{final position} $\history = \bigcup\limits_{i=0}^{\infty} \database_{i}$ it is true that $\history \models (R(Q_0))(a,b)$.
\end{itemize}

Let us note that $\database_{i+1} = \database_i$ when $\requests(\mathcal{Q}^\leftrightarrow, \database_i)$ is empty. 

It also would not hurt if, before proceeding with the reading, the Reader wanted to solve:

\begin{exercise}
Notice that if $i$ is even (odd) then  all the requests from ${\mathcal S}_i$ are generated by  $G(L)$ (resp. R(L)), for some $L\in \mathcal{Q}$  which means that all the edges added by {\em Fugitive} in his move $i+1$ are red (resp. green).
\end{exercise}

Let $\step$ be ternary relation such that $\pair{\database,\database',\mathcal{L}} \in \step$  when $\database'$ can be the result of one move of Fugitive, in position  $\database$, in the game of Escape with set of regular languages $\mathcal{L}$.

Obviously, different strategies of Fugitive may lead to different final positions. We will denote set of all final positions reachable from a starting structure $\database_0$, for a set of regular languages $\mathcal{L}$, as $\histories(\mathcal{L}^\leftrightarrow, \database_0)$.

Now we can state the crucial Lemma, that connects the game of Escape and QDP-RPQ:

\begin{lemma}\label{bridge}
For an instance of QDP-RPQ consisting of regular language $Q_0$ over $\Sigma$ and a set of regular languages $\mathcal{Q}$ over $\Sigma$ the two conditions are equivalent:
\begin{enumerate}[(i)]
\item $\mathcal{Q}$ does not determine $Q_0$
\item Fugitive has a winning strategy in Escape($Q_0$, $\mathcal{Q}$).
\end{enumerate}
\end{lemma}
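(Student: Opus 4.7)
The plan is to funnel everything through Lemma \ref{lm-det-struct}, which reduces non-determinacy to the existence of a \emph{counterexample} structure $\mathbb{M}$ with distinguished vertices $a,b$ satisfying $\mathbb{M}\models \mathcal{Q}^{\leftrightarrow}$, $\mathbb{M}\models G(Q_0)(a,b)$, and $\mathbb{M}\not\models R(Q_0)(a,b)$. The two directions of the equivalence then read naturally: a won play of Fugitive manufactures a counterexample out of thin air by construction, while a counterexample dictates a winning play through a homomorphism-guided strategy.

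For $(ii)\Rightarrow(i)$, I would take a winning play yielding $\history=\bigcup_i \database_i$ from $\database_0=(G(w))[a,b]$ with $w \in Q_0$, and verify that $\history$ together with $a,b$ is the counterexample demanded by Lemma \ref{lm-det-struct}. That $\history\models G(Q_0)(a,b)$ is immediate from $\database_0 \subseteq \history$, and $\history\not\models R(Q_0)(a,b)$ is the hypothesis that Fugitive wins. The only real content is $\history\models \mathcal{Q}^{\leftrightarrow}$: given any hypothetical $\pair{x,y,t}\in\requests(\mathcal{Q}^{\leftrightarrow},\history)$ with $t=\rtgd{L}{L'}$, a finite witnessing $L$-path already lies inside some $\database_i$, hence $\pair{x,y,t}\in\requests(\mathcal{Q}^{\leftrightarrow},\database_i)$, and move $i+1$ is obliged to call \textsc{Add} for it, giving $\database_{i+1}\models L'(x,y)$ and thus $\history\models L'(x,y)$, contradicting the assumption.

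For $(i)\Rightarrow(ii)$, fix a counterexample $\mathbb{M}$ and build Fugitive's strategy by maintaining, inductively, a homomorphism $h_i:\database_i\to\mathbb{M}$ pinning the two distinguished endpoints to $a$ and $b$. Since $\mathbb{M}\models G(Q_0)(a,b)$, there is some $w\in Q_0$ and a path in $\mathbb{M}$ from $a$ to $b$ labelled $G(w)$; Fugitive plays $\database_0=(G(w))[a,b]$ and defines $h_0$ by sending this chain onto the witnessing path. Given $h_i$ and a request $\pair{x,y,t}\in\requests(\mathcal{Q}^{\leftrightarrow},\database_i)$ with $t=\rtgd{L}{L'}$, preservation of chain queries under $h_i$ gives $\mathbb{M}\models L(h_i(x),h_i(y))$, so $\mathbb{M}\models t$ forces some $w'\in L'$ with a matching path in $\mathbb{M}$ from $h_i(x)$ to $h_i(y)$. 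Fugitive invokes \textsc{Add} with precisely this $w'$ and extends $h_i$ along the witnessing path; because the fresh vertices introduced by different requests are pairwise disjoint, the piecewise extension $h_{i+1}$ is well-defined. Taking unions produces a homomorphism $h:\history\to\mathbb{M}$ fixing $a$ and $b$, and since regular path queries are existential positive and hence preserved under homomorphisms, $\mathbb{M}\not\models R(Q_0)(a,b)$ entails $\history\not\models R(Q_0)(a,b)$, so Fugitive wins.

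The main obstacle I foresee is primarily bookkeeping: in the $(i)\Rightarrow(ii)$ direction one must verify that choices made independently for different requests in a single move do not clash, which is precisely why Fugitive's non-determinism in choosing $w'\in L'$ per request is essential, and why \textsc{Add} introduces fresh vertices. In the $(ii)\Rightarrow(i)$ direction the delicate point is the compactness argument reducing an $\history$-request to some $\database_i$-request; the colour alternation highlighted in the preceding Exercise plays no role in the argument itself but confirms that a green-sourced (resp. red-sourced) request arising at stage $i$ is answered at the correct parity of stage $i+1$, so no request is ever overlooked.
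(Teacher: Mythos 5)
Your proposal is correct and follows essentially the same route as the paper: the $(ii)\Rightarrow(i)$ direction is the paper's Lemma~\ref{rcsatisfied} (which the paper leaves as an exercise, and which your compactness argument proves correctly, including the needed observation that $\database_i\not\models L'(x,y)$ follows from $\history\not\models L'(x,y)$), while your homomorphism-guided strategy for $(i)\Rightarrow(ii)$ is exactly the paper's universality argument (Lemmas~\ref{universal} and~\ref{lm-universal}), merely inlined into the strategy description rather than stated as a separate lemma. Your explicit attention to pinning $h$ on the endpoints $a,b$ and to the disjointness of fresh vertices across requests matches the paper's construction of $h_i^r$ and $\bigcup_{r} h_i^r$.
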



\subsection{Universality of Escape. Proof of Lemma~\ref{bridge}}\label{uniwersalnosc}

\noindent
First  let us leave it as an easy exercise for the Reader to prove:

\begin {lemma}\label{rcsatisfied}
For each set of RCs $T$, for each initial position $ \database_0$ and for 
 each ${\history \in \histories(T,\database_0)}$ it holds that  $\history \models T$.
\end{lemma}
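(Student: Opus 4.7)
The plan is to show that any violation of an RC in $T$ which manifests in $\mathbb{H}$ must already manifest at some finite stage $\mathbb{D}_i$, and then use the fact that the game rules force Fugitive to satisfy \emph{every} pending request in one move. Concretely, fix an RC $t = \rtgd{L}{L'}$ in $T$ and vertices $x,y$ with $\mathbb{H} \models L(x,y)$; the goal is to prove $\mathbb{H} \models L'(x,y)$.

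First, I would use the finitary nature of regular-path witnesses. A derivation of $\mathbb{H} \models L(x,y)$ uses some word $w \in L$ and a finite path $w[x,y]$ in $\mathbb{H}$. Since $\mathbb{H} = \bigcup_{i} \mathbb{D}_i$ and the sequence $\mathbb{D}_0 \subseteq \mathbb{D}_1 \subseteq \ldots$ is increasing, all finitely many edges of this witnessing path appear in some $\mathbb{D}_i$. Hence $\mathbb{D}_i \models L(x,y)$.

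Next, I split on whether $\mathbb{D}_i \models L'(x,y)$. If yes, then since $\mathbb{D}_i \subseteq \mathbb{H}$, we immediately get $\mathbb{H} \models L'(x,y)$. Otherwise, by definition of $\requests$, the triple $\pair{x,y,t}$ lies in $S_i = \requests(T, \mathbb{D}_i)$. By the rules of the game, the position $\mathbb{D}_{i+1}$ produced by Fugitive's next move is the union, over \emph{all} triples in $S_i$, of the structures returned by $Add(\mathbb{D}_i, \cdot, \cdot)$. In particular, $\mathbb{D}_{i+1}$ contains a path $w'[x,y]$ for some $w' \in L'$, so $\mathbb{D}_{i+1} \models L'(x,y)$, and therefore $\mathbb{H} \models L'(x,y)$ as well.

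Since $t \in T$ and the pair $\pair{x,y}$ were arbitrary, this gives $\mathbb{H} \models T$. I do not see a genuine obstacle here: the only subtlety worth stating carefully is that Fugitive is \emph{obliged} (not merely allowed) to respond to every request present in $S_i$ in a single move, which is exactly why a pending violation at stage $i$ cannot survive to stage $i+1$; and of course the argument relies essentially on the monotonicity of the sequence $\mathbb{D}_0 \subseteq \mathbb{D}_1 \subseteq \ldots$, which is immediate from the definition of $Add$.
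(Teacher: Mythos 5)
Your proof is correct; in fact the paper offers no proof of this lemma at all, explicitly leaving it ``as an easy exercise for the Reader,'' and your argument is precisely the intended one. The three ingredients you isolate --- a witness for $L(x,y)$ is a finite path and hence appears in some finite stage $\mathbb{D}_i$ of the increasing sequence, the game rules \emph{oblige} Fugitive to satisfy every request in $S_i$ in the next move, and satisfaction of the existential-positive query $L'(x,y)$ is preserved when passing from $\mathbb{D}_{i+1}$ to the union $\mathbb{H}$ --- are exactly what makes the exercise go through.
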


With the above Lemma, the proof of Lemma \ref{bridge} (ii)$\Rightarrow$(i) 
 is  straightforward:
the winning final position of Fugitive can serve as the counterexample $\mathbb M$ from Lemma~\ref{lm-det-struct}.

The opposite direction, (i)$\Rightarrow$(ii) is not completely obvious. Notice that it could {\em a priori} happen that, while
some counterexample exists, it is some terribly complicated structure which cannot be constructed as a final position in a play of the game of Escape. 
We should mention here that all the notions of Section \ref{sygnatura} have their counterparts in [G15]. Instead of Regular Constrains however, in [G15] one finds conventional Tuple Generating Dependencies\footnote{Notice that if all each of the languages in $\mathcal Q$ consists of a single word, then RCs degenerate into $TGDs$ and {\em Escape} degenerates into {\em Chase}.}, and instead of the game of Escape one finds the conventional notion of Chase. But, while in [G15] the counterpart 
of Lemma \ref{bridge} follows from the well-known fact that Chase is a universal structure, here we do not have such convenient tool available off-the-shelf, and we need to built our own.

\begin{lemma} \label{universal}

Suppose 
structures $\database_0$ and $\mathbb M$ over $\bar\Sigma$ are such that there exists 
a homomorphism $h_0:\database_0\rightarrow\mathbb M$. Let $T$ be a set of RCs and suppose  ${\mathbb M} \models T$.
Then from some final position $\history\in\histories(T, \database_0)$ there exists a homomorphism $h:\history\rightarrow \mathbb{M}$
\end{lemma}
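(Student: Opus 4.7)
The plan is to describe an explicit strategy for \emph{Fugitive}, starting from $\database_0$, and to construct, in parallel with the play, a sequence of homomorphisms $h_i:\database_i\to\mathbb{M}$ extending the given $h_0$, so that passing to the limit gives the desired $h$.

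First I would make the following observation, which drives the whole construction. Suppose $h_i:\database_i\to\mathbb{M}$ is a homomorphism and $\langle x,y,L\to L'\rangle\in\requests(T,\database_i)$. Then $\database_i\models L(x,y)$, so there is a word $v\in L$ witnessed by a $v$-labelled path from $x$ to $y$ in $\database_i$; applying $h_i$ edge by edge, the same $v$-labelled path appears in $\mathbb{M}$ from $h_i(x)$ to $h_i(y)$, so $\mathbb{M}\models L(h_i(x),h_i(y))$. Because $\mathbb{M}\models T$, we also have $\mathbb{M}\models L'(h_i(x),h_i(y))$, so there exists some word $w\in L'$ and a $w$-labelled path $\pi$ from $h_i(x)$ to $h_i(y)$ in $\mathbb{M}$.

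The strategy is now the obvious one: for every request $\langle x,y,L\to L'\rangle\in S_i$, let \emph{Fugitive} pick $w\in L'$ to be precisely the label of such a witnessing path $\pi$ in $\mathbb{M}$, and use exactly this $w$ as the argument of $Add(\database_i,L\to L',\langle x,y\rangle)$. This determines the move $\database_{i+1}$. I then extend $h_i$ to $h_{i+1}$ by sending the fresh intermediate vertices of each new path $w[x,y]$ to the consecutive internal vertices of the corresponding path $\pi$ in $\mathbb{M}$. Since the newly introduced vertices are all fresh and distinct across different requests, these extensions do not conflict with $h_i$ nor with each other, and by construction every new edge of $\database_{i+1}$ is mapped to an edge of $\mathbb{M}$ with the same label; so $h_{i+1}:\database_{i+1}\to\mathbb{M}$ is a homomorphism extending $h_i$.

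Finally I would take the limit. Let $\history=\bigcup_{i\ge 0}\database_i$; this is, by definition, the final position of the play described above, so $\history\in\histories(T,\database_0)$. Because each $h_{i+1}$ extends $h_i$, the union $h=\bigcup_i h_i$ is a well-defined map with domain $\history$, and it is a homomorphism into $\mathbb{M}$ since every edge of $\history$ already appears in some $\database_i$ and is thus preserved by $h_i=h\!\upharpoonright\!\database_i$. The main point where one has to be careful is the choice step: one must make sure that the requests from $S_i$ are handled in a mutually consistent way, which is what lets us package all of them into a single move of the game — but this is automatic because $Add$ always introduces fresh internal vertices, so the individual homomorphic extensions of $h_i$ for different requests simply glue together. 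This gives exactly the final position and homomorphism required by the Lemma.
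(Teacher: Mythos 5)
Your proposal is correct and takes essentially the same route as the paper: the paper factors out your one-step construction as Lemma~\ref{lm-universal} (for each request, pick $w\in L'$ to be the label of a witnessing path in $\mathbb{M}$, map the fresh internal vertices onto that path, and glue the extensions over all requests, which is safe because the new vertices are fresh), and then takes the union $\bigcup_i\database_i$ and $\bigcup_i h_i$ exactly as you do. If anything, you are slightly more explicit than the paper on one point it leaves tacit, namely that $\mathbb{M}\models L(h_i(x),h_i(y))$ must first be derived by pushing the witnessing path through $h_i$ before $\mathbb{M}\models T$ can be invoked.
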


\begin{proof}
First we need to prove:

\begin{lemma}
\label{lm-universal}
For structures $\database_i$, $\mathbb M$ over $\bar\Sigma$, a homomorphism $h_i :\database_i\rightarrow\mathbb{M}$ and set of RCs $T$ if ${\mathbb M} \models T$ then there exists some structure $\database_{i+1}$ such that $\step(\database_i,\database_{i+1}, T)$ and there exists homomorphism $h_{i+1}:\database_{i+1}\rightarrow\mathbb{M}$ such that $h_i \subseteq h_{i+1}$.
\end{lemma}

\begin{proof}
For $r = \pair{x,y,\rtgd{X}{Y}}$ in $R_i = \requests(T, \database_i)$ let $x' = h_i(x)$ and $y' = h_i(y)$. We know that ${\mathbb M} \models T$ so ${\mathbb M} \models Y(x',y')$ and thus for some $a_1a_2\ddd a_n \in Y$ there is path $p' = a_1(x',x'_1), \\ a_2(x'_1,x'_2)\ddd  a_n(x'_{n-1}, y')$ in ${\mathbb M}$. Let $\database_i^r$ be a structure created by adding to $\database_i$ new path $p = a_1(x,x_1),\\a_2(x_1,x_2),\ddd a_n(x_{n-1}, y)$ (with $x_i$ being new veritces). Let $h_i^r = h_i \cup \set{\pair{x_i, x_i'} | i \in [n-1]}$. Now let $\database' = \bigcup_{r \in R_i}\database_i^r$ and $h_i' = \bigcup_{r \in R_i}h_i^r$. It is easy to see that $\database_i'$ and $h_i'$ are requested $\database_{i+1}$ and $h_{i+1}$.
\end{proof}

To end the proof of Lemma~\ref{universal} notice that if $\database_0,\database_1,\ddd$ are as constructed by Lemma~\ref{lm-universal} then $\bigcup_{i=0}^{\infty} \database_i$ is equal to some final position from $\histories(T,\database_0)$ and that $\bigcup_{i=0}^{\infty} h_i$ is required homomorphism $h$.
\end{proof}

Now we will prove the (i)$\Rightarrow$(ii) part of Lemma~\ref{bridge}.

Let ${\mathbb M}$ be a counterexample from Lemma~\ref{lm-det-struct}, $a,b$ and $w\in Q_0$ such that ${\mathbb M} \models (G(w))(a,b)$ and ${\mathbb M} \not\models (R(Q_0))(a,b)$. Applying Lemma~\ref{universal} to $\database_0 = G(w[a,b])$ and 
to $\mathbb M$ we know that there exists a final position $\history$ such that there is homomorphism from $\history$ to ${\mathbb M}$. It is clear that $\history \not\models (R(Q_0))(a,b)$ as we know that ${\mathbb M} \not\models (R(Q_0))(a,b)$. This shows that $\history$ is indeed a winning final position.

This concludes the proof of the Lemma~\ref{bridge}.

\section{The Reduction}\label{redukcja}

\begin{definition}[\textbf{Our Grid Tilling Problem (OGTP)}]\label{ogtp}
Given a set of {\bf shades}  $\mathcal{S}$ (\textbf{black} $\in \mathcal{S}$) and a list $\mathcal{F} \subseteq \{V,H\} \times \mathcal{S} \times \{V,H\} \times \mathcal{S} $ of forbidden pairs $\pair{a,b}$ where $a,b \in \{V,H\} \times \mathcal{S}$ determine whether there exists a square grid $\mathbb{G}$ (a directed graph, as in Figure 1. but of any size) such that:
\begin{itemize}
\item[(a1)] each horizontal edge of $\mathbb{G}$ has a label from $\{H\} \times \mathcal{S}$
\item[(a2)] each vertical edge of $\mathbb{G}$ has a label from $\{V\} \times \mathcal{S}$
\item[(b1)] bottom-left vertical edge is colored \textbf{black}
\item[(b2)] upper-right horizontal edge is colored \textbf{black}
\item[(b3)] \textbf{G} contains no forbidden paths of length $2$ labeled by $(a,b) \in \mathcal{F}$
\end{itemize}
\end{definition}

\begin{figure}
\centering
\includegraphics[width=0.5\textwidth]{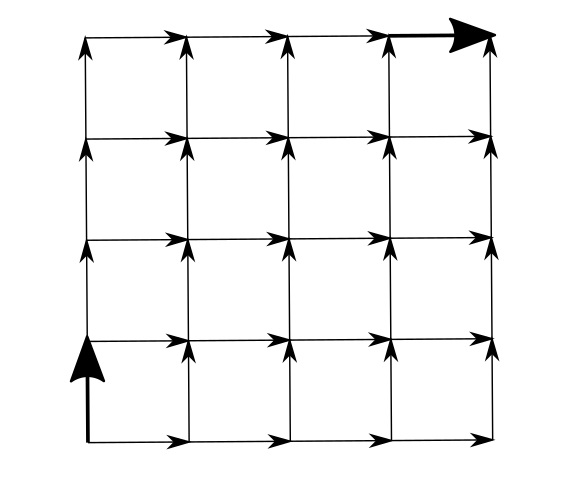}
\caption{\label{fig:fig1} Our Grid.}
\end{figure}

By standard argument one can show that:

\begin{lemma}
Our Grid Tilling Problem is undecidable.
\end{lemma}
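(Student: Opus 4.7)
The plan is to reduce the Halting Problem for deterministic Turing machines to OGTP: given a TM $M$, I will construct a set of shades $\mathcal{S}_M$ (containing \textbf{black}) and a list $\mathcal{F}_M$ of forbidden pairs such that an OGTP-grid satisfying (a1)--(b3) exists if and only if $M$ halts on the empty input. Since the Halting Problem is undecidable, this will give the lemma.

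The encoding is the standard space-time diagram of $M$. Rows of the grid, bottom to top, represent successive configurations of $M$; columns represent tape cells. A horizontal edge in row $k$, column $j$ carries a shade encoding either a plain tape symbol or a $(\text{state}, \text{tape symbol})$ pair marking the head's position at time $k$ in cell $j$. A vertical edge carries a shade acting as a ``message'' between consecutive rows, telling the next row whether the head is about to arrive at this cell and, if so, in which state. The \textbf{black} shade plays the role of an anchor: a few forbidden pairs involving \textbf{black} convert (b1) into ``the bottom row is the initial configuration (blank tape, head in the leftmost cell, initial state)'' and (b2) into ``some accepting-state symbol appears at the upper-right corner''.

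The set $\mathcal{F}_M$ of forbidden pairs, which by definition are constraints on paths of length $2$ in the grid, implements the local validity of the computation. Horizontal--horizontal pairs ensure that within a row the head appears at most once and that tape contents are locally consistent. Horizontal--vertical and vertical--horizontal pairs synchronise the vertical ``message'' with the horizontal head marker at each vertex. Vertical--vertical pairs, together with the previous ones, enforce that cells away from the head do not change their symbol between rows, while cells touched by the head change according to one of $M$'s transitions. Since $M$ is deterministic, at each vertex only the correct successor configuration is compatible with $\mathcal{F}_M$.

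The main step -- the one that ``standard argument'' hides in the statement -- is to verify soundness and completeness. For soundness, any valid grid can be read row by row as a computation of $M$ beginning at the initial configuration (by (b1)) and reaching an accepting state (by (b2)). For completeness, any halting computation of $M$ can be padded by idle steps after acceptance and by frontier shades on unused columns, producing an $n \times n$ tiling for every sufficiently large $n$. The only notational subtlety, and the main obstacle to making the encoding actually work, is that the constraint format forces everything to be expressed along paths of length $2$ rather than as the more common Wang-tile neighbourhood of a vertex; this is precisely why the vertical edges are equipped with ``message'' shades carrying head/state data in addition to the tape content carried by horizontal edges.
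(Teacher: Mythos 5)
Your overall route --- reducing the halting problem via the space--time diagram of $M$, with configurations on rows, tape/head data on horizontal edges and ``message'' shades on vertical edges --- is exactly the ``standard argument'' the paper invokes (it gives no proof of this lemma at all), so the approach is the intended one. But one step fails as literally stated. In the grid of Definition~\ref{ogtp} all edges are directed upwards or to the right, and a forbidden pair constrains a \emph{directed path of length $2$}: an edge into a vertex followed by an edge out of it. The bottom-left vertical edge $(v_{0,0},v_{0,1})$ and the first bottom-row horizontal edge $(v_{0,0},v_{1,0})$ are both \emph{out}-edges of $v_{0,0}$, and no directed path leaving the black edge can ever return to row $0$; hence no path of length $2$ (indeed no directed path at all) contains the black edge together with any bottom-row horizontal edge, and forbidden pairs involving \textbf{black} cannot convert (b1) into ``the bottom row is the initial configuration''. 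The repair stays inside your framework: let the black edge pin, through its $VH$-pair at $v_{0,1}$, the leftmost horizontal edge of \emph{row $1$} to a head-start shade; let $HH$-pairs propagate blank-initial shades along row $1$; and give row $0$ and the messages below row $1$ dedicated filler shades, typed by further forbidden pairs so they can occur nowhere else. (The same typing trick is needed, and works, to keep black itself at the bottom-left: forbid every $HV$- and $VV$-pair ending in a black vertical edge, so such an edge can only sit where its lower endpoint has no incoming edge, i.e.\ at $(v_{0,0},v_{0,1})$. Your reading of (b2) is fine, since the top-right horizontal edge \emph{does} have predecessors to constrain.)

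A second over-claim: ``horizontal--horizontal pairs ensure that within a row the head appears at most once'' is unenforceable, because two head-marked edges separated by other edges lie on no common path of length $2$. Fortunately you do not need it: once the initial row is genuinely anchored, determinism makes each row functionally determined by its predecessor and uniqueness of the head propagates by induction --- \emph{provided} the messages really see a full radius-$1$ window, which requires one device beyond what you describe. A vertical edge in column $k$ between rows $j$ and $j+1$ lies on length-$2$ paths only with the horizontal edge of cell $k-1$ in row $j$ (the $HV$-pair) and the horizontal edge of cell $k$ in row $j+1$ (the $VH$-pair); it never touches cells $k$ or $k+1$ of the row below, so a naive message cannot be forced to report them truthfully. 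The standard fix is to let each horizontal shade be an overlapping width-$3$ window of the configuration, with $HH$-pairs enforcing the overlap; then the single $HV$-pair already hands the message the entire neighbourhood it must verify, and border columns are handled by border-marked shades typed, as above, to occur only where the relevant incident edge is missing. With these two corrections your reduction goes through and is, in substance, the proof the paper leaves to the reader.
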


Now we present a reduction from OGTP to the QDP-RPQ. Suppose an instance $\langle \mathcal{S},\mathcal{F}\rangle$ of OGTP is given, we will 
construct an instance $\langle \mathcal{Q}, Q_0\rangle$ of  QDP for RPQ. 

\noindent
The edge alphabet (signature) will be $\Sigma = \{\alpha, \beta, \omega\} \cup \Sigma_{0}$, 
where $\Sigma_{0} = \{A,B\} \times \{\hor, \ver\} \times \{W,C\} \times \mathcal{S}$. We think of
$H$ and $V$ as {\bf directions} -- \textit{Horizontal} and \textit{Vertical}. $W$ and $C$ stand for \textit{Warm} and \textit{Cold}.
It is worth reminding at this point that relations from $\bar\Sigma$ will -- apart from shade, direction and \textbf{temperature} -- have also {\bf color}, red or green.

\begin{notation}
\label{notation-cc}
We  use the following notation for elements of $\Sigma_{0}$:

\vspace{-4mm}
\begin{center}
$ \CC{p}{q}{r}{s}{\empty} := (\mathbf{p},q,r,s) \in \Sigma_{0} $
\end{center}
\vspace{1mm}

\noindent
Symbol $\bullet$ and empty space are to be understood as wildcards. 
This means, for example,  that $\CC{A}{\hor}{}{a}{\empty}$ denotes the set $\{\CC{A}{\hor}{W}{a}{\empty}, \CC{A}{\hor}{C}{a}{\empty} \}$ and $\CC{\bullet}{\hor}{W}{a}{\empty}$ denotes $\{\CC{A}{\hor}{W}{a}{\empty}, \CC{B}{\hor}{W}{a}{\empty} \}$. 
\end{notation}

\noindent
Now we  define $\mathcal{Q}$ and $Q_{0}$. 
Let $\mathcal{Q}_{good}$ be a set of 8 languages:
\begin{enumerate}
\item $\omega$
\item $\alpha + \beta$
\item $\CC{B}{\hor}{W}{\empty}{\empty}\CC{A}{\ver}{W}{\empty}{\empty} + \CC{B}{\ver}{C}{\empty}{\empty}\CC{A}{\hor}{C}{\empty}{\empty}$
\item $\CC{A}{\hor}{C}{\empty}{\empty}\CC{B}{\ver}{C}{\empty}{\empty} + \CC{A}{\ver}{W}{\empty}{\empty}\CC{B}{\hor}{W}{\empty}{\empty}$
\item $\CC{B}{\ver}{C}{\empty}{\empty} + \CC{B}{\ver}{W}{\empty}{\empty}$
\item $\CC{B}{\hor}{W}{\empty}{\empty} + \CC{B}{\hor}{C}{\empty}{\empty}$
\item $\CC{A}{\ver}{W}{\empty}{\empty} + \CC{A}{\ver}{C}{\empty}{\empty}$
\item $\CC{A}{\hor}{C}{\empty}{\empty} + \CC{A}{\hor}{W}{\empty}{\empty}$
\end{enumerate}


\noindent
Let $\mathcal{Q}_{bad}$ be a set of languages:

\begin{enumerate}
\item $\beta \Big( \bigoplus_{s \in \mathcal{S} \setminus \{black\}} \CC{A}{\ver}{W}{s}{\empty} \Big) \Sigma_{0}^{\star} \omega$
\item $\beta \Sigma_{0}^{\star} \Big( \bigoplus_{s \in \mathcal{S} \setminus \{black\}} \CC{B}{\hor}{W}{s}{\empty} \Big) \omega$
\item $\beta \Sigma_{0}^{\star} \CC{\bullet}{d}{W}{a}{\empty} \CC{\bullet}{d'}{W}{b}{\empty} \Sigma_{0}^{\star} \omega$ for each forbidden  $\langle(d,a), (d',b)\rangle \in \mathcal{F}$.
\end{enumerate}

\noindent
Finally, let $\mathcal{Q}_{ugly}$ be a set of languages:

\begin{enumerate}
\item $\alpha \Sigma_{0}^{\star} \CC{\bullet}{\empty}{W}{\empty}{\empty} \Sigma_{0}^{\star} \omega$
\item $\beta \Sigma_{0}^{\star} \CC{\bullet}{\empty}{C}{\empty}{\empty} \Sigma_{0}^{\star} \omega$
\end{enumerate}

\noindent

We write ${Q}_{good}^{i}, {Q}_{bad}^{i}, {Q}_{ugly}^{i}$ to denote the i-th language of the corresponding group. Now we can define 
$$\mathcal{Q} := \mathcal{Q}_{good} \cup \mathcal{Q}_{bad} \cup \mathcal{Q}_{ugly}$$

The sense of the construction will (hopefully) become clear later. But already at this point the reader can 
notice that there is a fundamental difference between languages from  ${\mathcal Q}_{good}$  and  languages from ${\mathcal Q}_{bad} \cup{\mathcal Q}_{ugly}$. Languages from 
${\mathcal Q}_{good}$ are all finite. The regular constraints 
$(Q^3_{good})^\leftrightarrow$ and $(Q^4_{good})^\leftrightarrow$ are of the form  ``for vertices $x,y,z$  and edges $e_1(x,y)$ and $e_2(y,z)$ of some color in the current structure, create a new $y'$ and add edges  $e'_1(x,y')$ and $e'_2(y',z)$ of the opposite color''
where the pair $\pair{e_1,e_2}$ comes from some small finite set of possible choices. Satisfying requests generated by the remaining languages in ${\mathcal Q}_{good}$  do not even allow/require 
adding a new vertex $y'$ -- just one new edge is added. 

On the other hand, each language in   ${\mathcal Q}_{bad} \cup{\mathcal Q}_{ugly}$ contains infinitely many words -- all 
words with some bad or ugly pattern. For $L\in {\mathcal Q}_{bad} \cup{\mathcal Q}_{ugly}$ requests generated by $L$ are of the form ``if you have any path in the current structure, green or red, between some verticies $x$ and $y$, containing such pattern, then add any new path from $x$ to $y$, of the opposite color, also containing the same pattern''. 

A small difference between languages in ${\mathcal Q}_{bad}$ and in ${\mathcal Q}_{ugly}$ is that 
languages in  ${\mathcal Q}_{ugly}$ do not depend on the constraints from the instance of Our Grid Tiling Problem while ones in ${\mathcal Q}_{bad}$ encode this instance. One important difference between languages in  ${\mathcal Q}_{good}\cup {\mathcal Q}_{ugly}$ and  ${\mathcal Q}_{bad}$ is that only  the last do mention shades.

Finally, define  $Q_{start} := \alpha [ \CC{A}{\hor}{C}{\empty}{\empty}\CC{B}{\ver}{C}{\empty}{\empty} ]^{+} \omega$, and let: 
$$Q_{0} := Q_{start} + \bigoplus_{L \in \mathcal{Q}_{ugly}} {L} + \bigoplus_{L \in \mathcal{Q}_{bad}} {L}$$

\section{The structure  of the proof of correctness}\label{przewodnik}
To end the proof of Theorem \ref{i1} we need to prove:

\begin{lemma}\label{main-lemma}\label{poprawnoscredukcji}
The following two conditions are equivalent:
\begin{enumerate}[(i)]
\item \label{grid-cond} An instance $\langle {\mathcal{S}},{\mathcal{F}} \rangle$ of OGTP has no solution.
\item \label{det-cond} $\mathcal{Q}$ determine $Q_{0}$.
\end{enumerate}
\end{lemma}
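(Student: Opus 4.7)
The plan is to first invoke Lemma~\ref{bridge}, which reformulates condition \ref{det-cond} as ``Fugitive has no winning strategy in Escape($Q_0,\mathcal{Q}$)''. The required equivalence then becomes: the instance $\langle\mathcal{S},\mathcal{F}\rangle$ of OGTP has a solution if and only if Fugitive wins Escape($Q_0,\mathcal{Q}$). A useful preliminary observation is that in any winning play the opening word $w$ must lie in $Q_{start}$. Indeed, if $w\in Q_{ugly}^i\cup Q_{bad}^i$, then $w$ itself lies in a language that is already a member of $\mathcal{Q}$, so $G(Q_{ugly}^i)(a,b)$ or $G(Q_{bad}^i)(a,b)$ already holds in $\database_0$; the very first move then generates a request that produces a red copy of some word in $Q_{ugly}^i\cup Q_{bad}^i\subseteq Q_0$, and Fugitive loses at once. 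Hence Fugitive must open with $w=\alpha(\CC{A}{\hor}{C}{\empty}{\empty}\CC{B}{\ver}{C}{\empty}{\empty})^n\omega$ for some $n\ge 1$, together with a chosen assignment of shades to each letter.

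For the direction ``OGTP has a solution $\Rightarrow$ Fugitive wins'' I would, given a valid $n\times n$ tiling $\mathbb{G}$, prescribe the strategy as follows: open with the staircase above, with shades matching the cold diagonal of $\mathbb{G}$; satisfy each $(Q_{good}^3)^\leftrightarrow$ and $(Q_{good}^4)^\leftrightarrow$ request by inserting the opposite-colour pair of edges with shades read off the corresponding cell of $\mathbb{G}$; satisfy the remaining good constraints (which only duplicate an existing edge in the opposite colour, or flip a single temperature) in the unique available way. Validity of $\mathbb{G}$ then guarantees that no $Q_{bad}^i$ or $Q_{ugly}^i$ request is ever produced: no forbidden pair appears, so $Q_{bad}^3$ is silent; both distinguished corner edges are black, so $Q_{bad}^{1}$ and $Q_{bad}^{2}$ are silent; and the alternation of cold edges after $\alpha$ with warm edges after $\beta$ makes both $\mathcal{Q}_{ugly}$ languages inactive. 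The resulting final position contains no red $Q_0$-path from $a$ to $b$, and Fugitive wins.

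For the converse I would, from a winning final position $\mathbb{H}$ whose opening word has length $2n+2$, read off a tiling $\mathbb{G}$ by tracing the green staircase together with the green and red edges inserted by the $(Q_{good}^{3,4})^\leftrightarrow$ requests: by the direction and temperature structure of $Q_{good}^{3}$ and $Q_{good}^{4}$ these edges form a grid-shaped skeleton of side $n$, on which the shade components of the $\Sigma_0$-labels define a candidate $\mathbb{G}$. Because Fugitive wins, no red word of $Q_0$ connects $a$ to $b$ in $\mathbb{H}$. This absence forces both distinguished corner edges to carry shade \textbf{black} (through $Q_{bad}^{1}$ and $Q_{bad}^{2}$) and forbids any pair in $\mathcal{F}$ from occurring on two consecutive skeleton edges (through $Q_{bad}^{3}$), so $\mathbb{G}$ is a solution of OGTP.

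The main obstacle will clearly be the converse direction. A priori, a winning final position $\mathbb{H}$ need not look like a grid at all: Fugitive may satisfy requests by picking awkward words, and he can introduce detours, parallel edges, extra vertices, or entirely unrelated substructures. The delicate part, which fills Sections~\ref{przewodnik}--\ref{ulga}, is to argue that $\mathcal{Q}_{bad}\cup\mathcal{Q}_{ugly}$ --- whose requests fire on every path matching an infinite pattern, wherever it sits in $\mathbb{H}$ --- is strong enough to force the $\mathcal{Q}_{good}$-skeleton to encode a consistent grid, so that any deviation from the intended shape either cancels itself or propagates into a red $Q_0$-path, contradicting Fugitive's victory.
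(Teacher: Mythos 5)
You follow the paper's architecture --- Lemma~\ref{bridge} to pass to the game, the forced opening from $Q_{start}$, a forcing argument that the play builds a grid, and reading the tiling off the shades --- and your easy direction is sound and essentially the paper's: the structure your strategy builds is exactly $\mathbb{G}_m$ with shades copied from the solution, which the paper instead presents directly as a finite counterexample $\mathbb{M}$ for Lemma~\ref{lm-det-struct} (thereby also getting finite non-determinacy). The genuine gap is the hard direction: you state that a winning play must yield a grid-shaped skeleton and then explicitly defer ``the delicate part'', but that deferred part \emph{is} the proof of this lemma. Concretely, what is missing is the chain of forcing lemmas: (a) Lemma~\ref{alphabeta}, that every $\alpha$/$\beta$-edge starts at $a$ and every $\omega$-edge ends at $b$, which is what upgrades ``no red $Q_0$-word from $a$ to $b$'' to Principle~II, ``no bad or ugly match \emph{anywhere} in the structure''; (b) the ``foot in the door'' step: the request $\pair{a,x_{1},(\alpha+\beta)^{\rightarrow}}$ must be answered with $R(\beta)$, because after $R(\alpha)$ Fugitive has no legal way to satisfy the requests generated by $Q_{good}^{4}$ (a Warm answer creates $R(Q_{ugly}^{1})(a,b)$, a Cold one recreates $R(Q_{start})(a,b)$); (c) Lemma~\ref{redWgreenC}, red $=$ Warm and green $=$ Cold, whose proof needs the connectivity Lemma~\ref{allpaths} so that a single wrongly-tempered edge extends to an ugly pattern spanning $a$ to $b$; (d) Lemma~\ref{changeword}, that every request generated by a two-word language $Q_{good}^{i}$, $i\in\{3,\dots,8\}$, must be answered with the \emph{other} word in the opposite color --- this is precisely the mechanism that excludes the ``awkward'' copying moves you worry about and forces the staircase to thicken into a grid; and (e) the layer induction of Lemma~\ref{musi-warstwy} together with Observation~\ref{ostatnia}, which pin down every $\database_i$ exactly and show the play halts at $\mathbb{G}_m$. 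Your closing hope that ``any deviation from the intended shape either cancels itself or propagates into a red $Q_0$-path'' is exactly what (a)--(e) establish, but nothing in your text supplies the argument, and it is not routine: it is the bulk of the paper.

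A smaller but real flaw sits in your read-off step for the converse: the tiling must be extracted from the \emph{red} (Warm) grid alone, as in Section~\ref{ulga}. The languages in $\mathcal{Q}_{bad}$ begin with $\beta$ and mention only Warm letters, so they constrain nothing about the shades of green (Cold) edges; those shades are completely arbitrary in a winning final position. A skeleton whose shade components are read from green and red edges together, as you propose, therefore need not be consistent, and conditions (b1)--(b3) of Definition~\ref{ogtp} can only be certified for the red grid.
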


For the (\ref{grid-cond})~$\Rightarrow$~(\ref{det-cond}) implication we will employ Lemma~\ref{bridge}, showing that if the instance $\langle {\mathcal{S}},{\mathcal{F}} \rangle$ has no solution then
{\em Fugitive} does not have a winning strategy in the Escape($\mathcal{Q}$, $Q_{0}$). 
As we remember from Section \ref{game}, in such a game {\em Fugitive} will first choose, as the initial position of the game,  a structure $w[a,b]$ 
for some $w\in G(Q_{0})$. Then, in each step, he will identify all the requests present in the current structure and satisfy them. He will win if he will be able to play forever without satisfying the query    $(R(Q_{0}))(a,b)$. 

While analyzing the strategy of {\em Fugitive} we will use the words ``must not'' and ``must'' as shorthands for
``or otherwise he will quickly lose the game''. 

Now our plan is first to notice that in his strategy {\em Fugitive} must obey the following principles:\smallskip\\
(I)~ The structure resulting from his initial move must be $(G(w))[a,b]$ for some $w\in Q_{start}$.\\
(II)~He must never allow any request generated by $\mathcal{Q}_{bad} \cup \mathcal{Q}_{ugly}$ to form in the current structure.
Notice that if no such words ever occur in the structure then all the requests 
are generated by  languages from $\mathcal{Q}_{good}$.

Then we will assume that {\em Fugitive's} play indeed follows the two principles and we will imagine us watching him playing, but watching in special glasses that make us insensitive to the shades from $\mathcal{S}$.
Notice that, since the only requests {\em Fugitive} will satisfy, are from $\mathcal{Q}_{good}$, we will not miss anything -- as the definitions of languages in $\mathcal{Q}_{good}$ are themselves shade-insensitive. 
In Section~\ref{straszna} we will prove that {\em Fugitive} must construct some particular structure, defined earlier in Section~\ref{mgrid} and called 
${\mathbb G}_m$, for some $m\in\mathbb N$. Then, in a short Section \ref{ulga} we will take off our glasses and recall that  the edges of ${\mathbb G}_m$ actually have shades.
Assuming that the original instance of OGTP has no solution, we will get that $R(\mathcal{Q}_{bad})(a,b)$  holds in the constructed structure.  This will end the proof of the (i)$\Rightarrow$(ii) direction. For the implication ($\neg$i)$\Rightarrow$($\neg$ii) we will notice, again in Section \ref{ulga} that if $\langle {\mathcal{S}},{\mathcal{F}} \rangle$ has a solution, then one of the structures ${\mathbb G}_m$, with shades duly assigned to edges, forms a counterexample $\mathbb M$ as required by Lemma \ref{lm-det-struct}. Since this  $\mathbb M$ will be finite, we will show that if  the instance 
$\langle {\mathcal{S}},{\mathcal{F}} \rangle$
of OGTP has a  solution, then ${\mathcal Q}$ does not
finitely determine $Q_0$ (which is a stronger statement than just saying that ${\mathcal Q}$ does not determine $Q_0$).


\section{Principle I : $\database_{0}$} 

The rules of the game of Escape are such that {\em Fugitive} loses when
he builds a path (from $a$ to $b$) labeled with $w\in R(Q_0)$. So -- when
trying to encode something --  one can think of words in $Q_0$ as of some sort of
forbidden patterns. And thus one can think of  $Q_0$ as of
a tool detecting that the player is cheating and not really building a
valid computation of the computing device we encode. Having this in
mind the Reader can imagine why the
words from languages from the groups
 ${\mathcal Q}_{bad}$  and  ${\mathcal Q}_{ugly}$,  which clearly are all about suspiciously looking patterns,
 are all in $Q_0$

But  another  rule of the game is that at the beginning  {\em Fugitive}
picks his initial position ${\mathbb D}_0$ as a path  (from $a$ to $b$)
labeled with some $w\in G(Q_0)$, so it would be nice to think of $Q_0$
as of initial configurations of this computing device. The fact that the same object 
is playing the set of forbidden patterns and, at the same time, the set of initial configurations
is a problem.  But this problem is solvable, as we are going to show in this Section. And having the languages 
  ${\mathcal Q}_{bad} \cup{\mathcal Q}_{ugly}$  also in $Q_0$
 is part of the solution. 

Assume that $\history$ is a final position of a play of the \textit{Escape} game that started with  $\database_0 = G(w)[a,b]$ for some $w \in Q_0$. This means, by Lemma \ref{rcsatisfied}, that $\history \models \mathcal{Q}^\leftrightarrow$. Recall that $\history$ is a structure over $\bar\Sigma$, which means that each edge of $\history$ is either red or green.

%
%
%
%
%

\begin{observation}\label{nogreenQ}
For all $x,y \in \history$ if $\history \models G(L)(x,y)$ for some $L \in \mathcal{Q}_{ugly} \cup \mathcal{Q}_{bad}$ then $\history \models R(Q_{0})(x,y)$.
\end{observation}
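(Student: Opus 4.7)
The plan is to observe that this is a direct consequence of two facts already set up: first, final positions satisfy all the regular constraints in $\mathcal{Q}^\leftrightarrow$ (this is Lemma~\ref{rcsatisfied}); second, every $L \in \mathcal{Q}_{ugly} \cup \mathcal{Q}_{bad}$ appears as a summand in the definition of $Q_0$, so inclusion of languages immediately gives inclusion of the corresponding red (or green) queries.

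Concretely, I would proceed as follows. Fix $L \in \mathcal{Q}_{ugly} \cup \mathcal{Q}_{bad}$ and $x,y \in \history$ such that $\history \models G(L)(x,y)$. Since $L \in \mathcal{Q}$, the regular constraint $L^{\rightarrow} = \rtgd{G(L)}{R(L)}$ belongs to $\mathcal{Q}^{\leftrightarrow}$. By Lemma~\ref{rcsatisfied}, $\history \models \mathcal{Q}^{\leftrightarrow}$, hence $\history \models L^{\rightarrow}$, which unfolds to: for all $u,v$, if $\history \models G(L)(u,v)$ then $\history \models R(L)(u,v)$. Applying this to the pair $(x,y)$ yields $\history \models R(L)(x,y)$.

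To conclude, I would note that by the definition $Q_0 := Q_{start} + \bigoplus_{L' \in \mathcal{Q}_{ugly}} L' + \bigoplus_{L' \in \mathcal{Q}_{bad}} L'$, we have $L \subseteq Q_0$ as languages over $\Sigma$, and therefore $R(L) \subseteq R(Q_0)$ as languages over $\bar\Sigma$. Since a regular path query is monotone with respect to inclusion of its defining language (any witnessing word for $R(L)(x,y)$ is also a witnessing word for $R(Q_0)(x,y)$), we obtain $\history \models R(Q_0)(x,y)$, as required.

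There is no real obstacle here: the observation is essentially a bookkeeping step that records the interaction between the ``$\leftrightarrow$'' direction of the constraints and the explicit inclusion of $\mathcal{Q}_{ugly} \cup \mathcal{Q}_{bad}$ inside $Q_0$. Its role further on will be to forbid \emph{Fugitive} from ever producing a green path matching some $L \in \mathcal{Q}_{ugly} \cup \mathcal{Q}_{bad}$ between the endpoints $a,b$, since that would immediately create the losing red pattern $R(Q_0)(a,b)$; this is exactly Principle~(II) announced in Section~\ref{przewodnik}.
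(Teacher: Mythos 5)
Your proof is correct and is essentially identical to the paper's: both invoke Lemma~\ref{rcsatisfied} to get $\history \models \mathcal{Q}^{\leftrightarrow}$, apply the constraint $\rtgd{G(L)}{R(L)} \in \mathcal{Q}^{\rightarrow}$ to conclude $\history \models R(L)(x,y)$, and then use $L \subseteq Q_0$ (hence $R(L) \subseteq R(Q_0)$) to finish. You simply spell out the monotonicity-of-queries step that the paper leaves implicit.
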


\begin{proof}
Notice that $\rtgd{G(L)}{R(L)} \in \mathcal{Q}^\rightarrow$ so $\history \models \Path{R(L)}{x,y}$ and as $L \subseteq Q_{0}$ it follows that $\history \models \Path{R(Q_{0})}{x,y}$.
\end{proof}

\begin{lemma}[\textbf{Principle I}]
Fugitive must choose to start the \textit{Escape} game from $\database_{0} = G(q)[a,b]$ for $q \in Q_{start}$.
\end{lemma}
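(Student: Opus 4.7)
The plan is to argue by contraposition: I will show that any choice of initial word $w \in Q_0$ with $w \notin Q_{start}$ immediately forces Fugitive into a losing final position. The starting observation is purely syntactic: by the definition
$$Q_0 \;=\; Q_{start} \,+\, \bigoplus_{L \in \mathcal{Q}_{ugly}} L \,+\, \bigoplus_{L \in \mathcal{Q}_{bad}} L,$$
every word $w \in Q_0 \setminus Q_{start}$ must belong to some $L \in \mathcal{Q}_{ugly} \cup \mathcal{Q}_{bad}$.

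Fixing such an $L$ and such a $w$, I would then consider the candidate initial position $\database_0 = G(w)[a,b]$. By construction this is simply a green path from $a$ to $b$ whose sequence of labels spells out exactly $G(w) \in G(L)$, so $\database_0 \models G(L)(a,b)$ trivially. Since moves in the Escape game are monotone (each $\database_{i+1}$ is obtained by \emph{adding} vertices and edges to $\database_i$, never removing anything), we have $\database_0 \subseteq \history$, and therefore $\history \models G(L)(a,b)$ for every final position $\history$ reachable from this start. Observation~\ref{nogreenQ}, applied at $x=a$ and $y=b$, now delivers $\history \models R(Q_0)(a,b)$, which is precisely the losing condition of the game.

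I do not anticipate any real obstacle here: the argument is essentially a one-line application of Observation~\ref{nogreenQ} combined with the monotonicity of play. The only mildly subtle point worth flagging is conceptual, namely that the bad news is baked into Fugitive's very first move --- no matter how cleverly he satisfies later requests, the triple $\langle a, b, \rtgd{G(L)}{R(L)} \rangle$ already lies in $\requests(\mathcal{Q}^\leftrightarrow, \database_0)$, and Lemma~\ref{rcsatisfied} then forces a red $R(L)$-path from $a$ to $b$ into every final position, hence a red $R(Q_0)$-path. In particular, there is no way for Fugitive to ``erase'' a compromising initial word drawn from $\mathcal{Q}_{ugly} \cup \mathcal{Q}_{bad}$, so under the convention that ``must'' means ``or else he loses quickly'', he is constrained to draw from $Q_{start}$.
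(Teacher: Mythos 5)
Your proposal is correct and matches the paper's own proof essentially verbatim: both reduce the claim to the observation that any $w \in Q_0 \setminus Q_{start}$ lies in some $L \in \mathcal{Q}_{ugly} \cup \mathcal{Q}_{bad}$, so $\database_0 \models G(L)(a,b)$, and then invoke Observation~\ref{nogreenQ} to conclude $\history \models R(Q_0)(a,b)$. Your explicit remarks on monotonicity of play and on the request $\langle a,b,\rtgd{G(L)}{R(L)}\rangle$ being forced via Lemma~\ref{rcsatisfied} merely spell out what the paper leaves implicit inside Observation~\ref{nogreenQ} and its proof.
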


\begin{proof}
If $q \in Q_{0} \setminus Q_{start}$ then $\database_{0} \models \Path{G(L)}{a,b}$ for some $L \in \mathcal{Q}_{ugly} \cup \mathcal{Q}_{bad}$ and it follows from Observation \ref{nogreenQ}. that \textit{Fugitive} loses.
\end{proof}

\begin{figure*}
\centering
\includegraphics[width=1.0\textwidth]{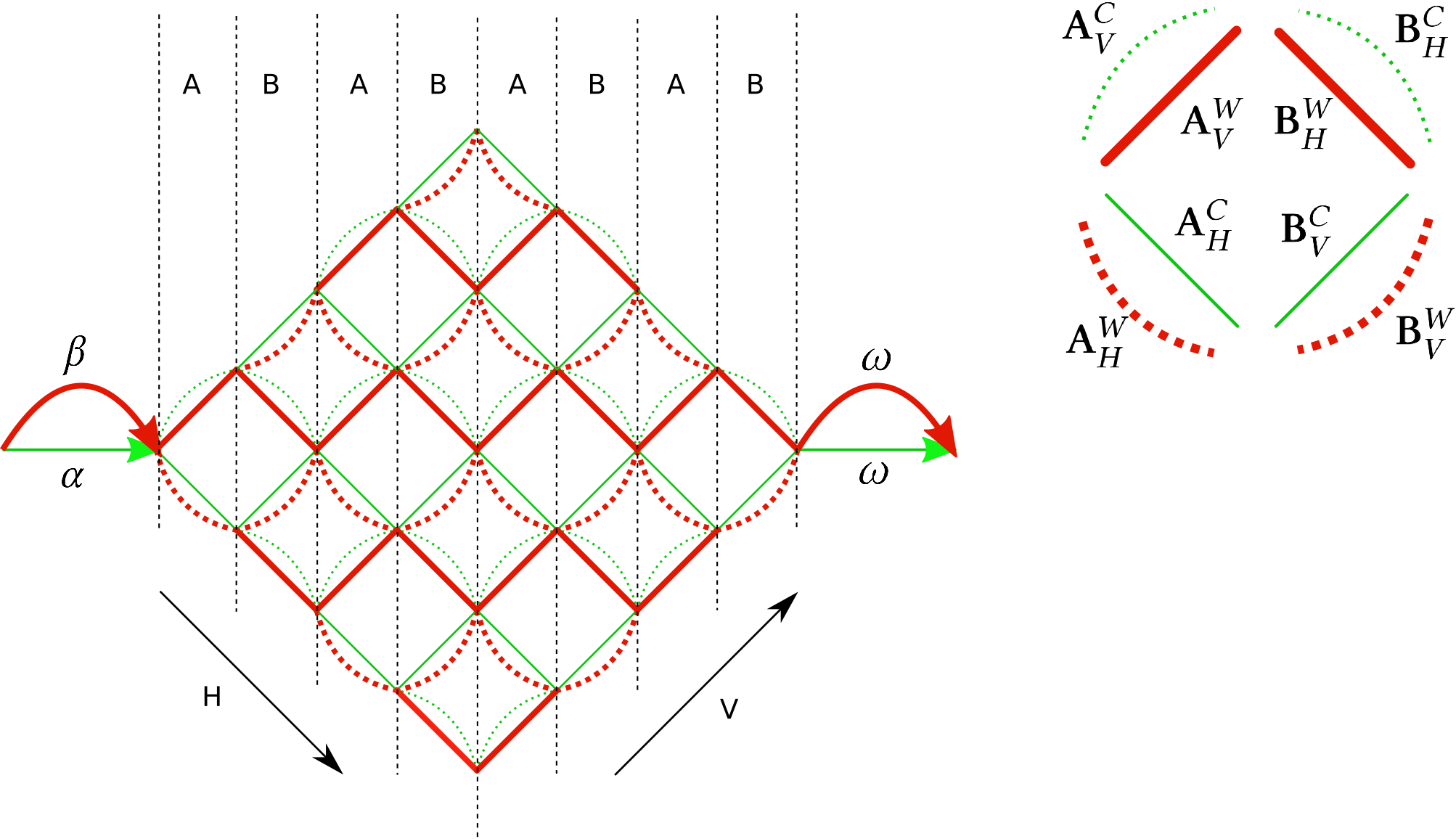}
\caption{\label{fig:fig1} ${\mathbb G}_m$ with $m=4$ (left). Smaller picture in the top-right corner explains how the different line styles on the main picture map to $\Sigma_{0}$.\protect\footnotemark}
\end{figure*}


\section{The grid ${\mathbb G}_m$}\label{mgrid}

\begin{definition}
${\mathbb G}_m$, for $m \in \mathbb{N}$, is (see Fig. 2)  a directed graph $ (V,E)$  where

 $V = \{a,b\} \cup \{v_{i,j} : i,j \in \interval{0}{m} \}$ and where the 
 edges from $E$ are labeled with symbols  $\alpha$ or $\beta$ or $\omega$ or one of the symbols of the form $\CC{p}{q}{r}{\empty}{\empty}$, where -- like before -- $p\in \{A,B\}$, $q\in \{\hor, \ver\}$  and $r\in \{W,C\}$. Each label has to also be either red or green (this gives us $(3+2^3)2$ possible labels, but only 12 of them will be used).
Notice that there is no $s\in \mathcal S$ here: the labels we now use are sets of symbols from $\bar\Sigma$ like in Notation~\ref{notation-cc}. One should imagine that we watch {\em Fugitive's} play in 
shade filtering glasses.

The edges of ${\mathbb G}_m$ are as follows:
\begin{itemize}
\item Vertex $v_{0,0}$ is a successor of $a$. Vertex $b$ is a successor of $v_{m,m}$. The successors of $v_{i,j}$ are  $v_{i+1,j}$ and $v_{i,j+1}$ (if they exist). Each node is connected to each of its successors with two edges, one green and one red.

 \item Each ``Cold'' edge, labeled with a symbol in $\CC{\bullet}{\empty}{C}{\empty}{\empty}$, is green.
 
  \item Each ``Warm'' edge, labeled with a symbol in   $ \CC{\bullet}{\empty}{W}{\empty}{\empty}$, is red.
  
  \item Each edge  $\langle v_{i,j}, v_{i+1,j}\rangle$ is horizontal -- its label is from $ \CC{\bullet}{\hor}{\empty}{\empty}{\empty}$.
  
 \item Each edge $\langle v_{i,j},v_{i,j+1}\rangle$ is vertical-- its label is from $\CC{\bullet}{\ver}{\empty}{\empty}{\empty}$.
 
 \item The label of each edge leaving $v_{i,j}\neq v_{m,m}$, with $i+j$ even, is from $ \CC{A}{\empty}{\empty}{\empty}{\empty}$, the label of each edge leaving $v_{i,j}\neq v_{m,m}$, with $i+j$ odd, is from $ \CC{B}{\empty}{\empty}{\empty}{\empty}$.
 
  \item Edges  $(a,v_{0,0},G(\alpha))$ and  $(a,v_{0,0},R(\beta))$ are in $ E$.
  \item Edges  $(v_{m,m},b,G(\omega))$ and $ (v_{m,m},b,R(\omega))$ are in $ E$.
 
\end{itemize}


\end{definition}


\footnotetext{Please use a color printer if you can.}

\section{Principle II} 

In this section we assume that the \textit{Fugitive} obeys Principle I and he selects the initial structure $\database_{0} =  G(\alpha [ \CC{A}{\hor}{C}{\empty}{\empty}\CC{B}{\ver}{C}{\empty}{\empty} ]^{m} \omega)[a,b]$ for some $m$.

\begin{lemma}\label{alphabeta}
Suppose $\history$ is the final position of a play of the Escape game which started from $\database_{0}$.
\begin{enumerate}
\item Every edge $e \in \history$ labeled with $G(\alpha), R(\alpha), G(\beta)$ or $R(\beta)$ begins in $a$.
\item Every edge $e \in \history$ labeled with $G(\omega)$ or $R(\omega)$ ends in $b$.
\end{enumerate}
\end{lemma}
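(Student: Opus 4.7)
The plan is to prove both clauses simultaneously by induction on $i$, via the invariant $(\star)_i$: every edge of $\database_i$ carrying a label in $\{G(\alpha),R(\alpha),G(\beta),R(\beta)\}$ starts at $a$, and every edge carrying a label in $\{G(\omega),R(\omega)\}$ ends at $b$. Since $\history=\bigcup_i \database_i$, passing to the limit $i\to\infty$ yields the lemma. The base case $(\star)_0$ is immediate from the hypothesis of this section: $\database_0$ is a green chain $G(\alpha)\cdots G(\omega)$ from $a$ to $b$, whose unique $\alpha$-edge leaves $a$, whose unique $\omega$-edge enters $b$, and which contains no $\beta$-edge at all.

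For the inductive step I want to inspect, language by language, which requests can be generated in $\database_i$ and what the corresponding newly added edges look like. The key syntactic observation is that among all languages of $\mathcal{Q}=\mathcal{Q}_{good}\cup\mathcal{Q}_{bad}\cup\mathcal{Q}_{ugly}$, the symbols $\alpha,\beta,\omega$ occur only in $Q^1_{good}=\omega$, in $Q^2_{good}=\alpha+\beta$, and in the languages of $\mathcal{Q}_{bad}\cup\mathcal{Q}_{ugly}$; moreover in each of the latter, $\alpha$ or $\beta$ is syntactically forced to be the first letter and $\omega$ the last letter of every word, while all interior letters lie in $\Sigma_0$. Consequently: a request generated by $Q^1_{good}$ at $\langle x,y\rangle$ requires an $\omega$-edge from $x$ to $y$ in $\database_i$, which by $(\star)_i$ forces $y=b$, and the added single $\omega$-edge of the opposite colour ends at $b$; a request generated by $Q^2_{good}$ requires an $\alpha$- or $\beta$-edge from $x$ to $y$, which by $(\star)_i$ forces $x=a$, and the added single $\alpha$- or $\beta$-edge leaves $a$; a request generated by some $L\in\mathcal{Q}_{bad}\cup\mathcal{Q}_{ugly}$ at $\langle x,y\rangle$ needs a full $G(L)$- or $R(L)$-path between $x$ and $y$, whose first letter is $\alpha$ or $\beta$ and whose last letter is $\omega$, so $(\star)_i$ forces both $x=a$ and $y=b$; the path that Fugitive then attaches has its only $\alpha$- or $\beta$-edge leaving $a$, its only $\omega$-edge entering $b$, and all intermediate edges in $\Sigma_0$. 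Requests generated by $Q^3_{good}$ through $Q^8_{good}$ introduce only $\Sigma_0$-labelled edges and therefore cannot affect the invariant.

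The only conceptual obstacle is that the argument is superficially circular: I invoke $(\star)_i$ to locate the endpoints of the requests and simultaneously use the resulting classification of requests to derive $(\star)_{i+1}$. This circularity is harmless precisely because the form of the relevant languages pins every occurrence of $\alpha,\beta,\omega$ to the boundary of the corresponding word, so the inductive hypothesis at stage $i$ is enough to determine the endpoints of every request that could produce such an edge at stage $i+1$. Once that is in place, closing the induction and passing to the union yields both clauses of the lemma.
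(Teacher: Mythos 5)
Your proposal is correct and takes essentially the same route as the paper: both argue by induction on the positions $\database_i$ (passing to the union for $\history$), with the decisive syntactic observation that in every language of $\mathcal{Q}$ an occurrence of $\alpha$ or $\beta$ is forced to be the first letter and $\omega$ the last, so the inductive invariant pins the endpoints of any request that could create such an edge. The paper states this observation in one compact sentence and dismisses clause (2) as analogous, while you spell out the language-by-language case analysis and explicitly defuse the apparent circularity; the content is the same and there is no gap.
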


\begin{proof}
(1) By induction we show that the claim is true in every $\database_i$. It is clearly true in $\database_{0}$. 
For the induction step use the fact  that for every language $L \in \mathcal{Q}$ and for each word $w \in L$  if $w$ contains $\alpha$ or $\beta$ then:\\
-- this  $\alpha$ or $\beta$ is the first letter of $w$ and \\
-- all words in $L$ begin from  $\alpha$ or $\beta$.\\
(2) Analogous.
\end{proof}

\begin{lemma}[\textbf{Principle II}]
Fugitive must never allow any request generated by $\mathcal{Q}_{bad}$ and $\mathcal{Q}_{ugly}$ to form in the current structure.
\end{lemma}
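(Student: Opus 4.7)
The plan is to show that as soon as Fugitive permits a request generated by some $L \in \mathcal{Q}_{bad} \cup \mathcal{Q}_{ugly}$ to appear in an intermediate structure $\database_i$, the final position $\history$ will satisfy $R(Q_0)(a,b)$, so he loses. By definition of ``request generated by $L$'', this means that either $\database_i \models G(L)(x,y)$ or $\database_i \models R(L)(x,y)$ holds for some pair $x,y$, and these two cases will be treated separately.

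My first step would be to record a simple syntactic observation about the languages in $\mathcal{Q}_{bad} \cup \mathcal{Q}_{ugly}$: each of their words begins with a single $\alpha$ or $\beta$, ends with a single $\omega$, and in between consists only of letters from $\Sigma_{0}$. Hence any witnessing path for $G(L)(x,y)$ or $R(L)(x,y)$ opens with an edge labeled in $\{G(\alpha),G(\beta),R(\alpha),R(\beta)\}$ and closes with an edge labeled $G(\omega)$ or $R(\omega)$. I would then appeal to Lemma~\ref{alphabeta}; although its statement is phrased for $\history$, its proof proceeds by induction on the step index and so applies equally to every intermediate $\database_i$. Applied to the first and last edges of the witnessing path, it pins down $x = a$ and $y = b$.

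The two-way case split is then short. If $\database_i \models G(L)(a,b)$ for some $L \in \mathcal{Q}_{bad} \cup \mathcal{Q}_{ugly}$, the green witnessing path survives into $\history \supseteq \database_i$, so $\history \models G(L)(a,b)$, and Observation~\ref{nogreenQ} immediately yields $\history \models R(Q_0)(a,b)$. If instead $\database_i \models R(L)(a,b)$, then since $L \subseteq Q_0$ by the construction of $Q_0$, already $\database_i \models R(Q_0)(a,b)$, and this red path likewise persists to $\history$. In either case, Fugitive has lost.

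I do not anticipate a real obstacle: the argument is essentially a direct bookkeeping check against the careful design of $\mathcal{Q}_{bad}$, $\mathcal{Q}_{ugly}$ and $Q_0$, namely that every word of any language involved in a potential losing pattern is framed by a leading $\alpha$ or $\beta$ and a trailing $\omega$. The only minor subtlety is the need to invoke Lemma~\ref{alphabeta} at an intermediate stage of the play rather than at its end, but this is resolved by noting that its proof is inductive in the step index.
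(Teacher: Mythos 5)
Your proof is correct and follows essentially the same route as the paper: Lemma~\ref{alphabeta} pins the endpoints of any witnessing path for a bad or ugly language to $a$ and $b$, and the red/green case split then forces $R(Q_{0})(a,b)$ --- your green case via Observation~\ref{nogreenQ} merely packages the same request-satisfaction argument the paper spells out inline. Your explicit remark that Lemma~\ref{alphabeta} applies at intermediate positions because its proof is by induction on the step index is a point the paper leaves implicit, so your write-up is, if anything, slightly more careful.
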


\begin{proof}
Let $\database$ be the current structure and $L \in \mathcal{Q}_{bad} \cup \mathcal{Q}_{ugly}$. 

First assume that $\database \models R(L)(x,y)$ for some $x,y$. Notice that from Lemma~\ref{alphabeta} $x = a$ and $y = b$. Because of that $\database \models R(L)(a,b)$ which means that $\database \models R(Q_{0})(a,b)$ and \textit{Fugitive} loses.

Now assume that $\database \models G(L)(x,y)$ for some $x,y$. Similarly, from Lemma~\ref{alphabeta}, $x = a$ and $y = b$. We have that $\pair{a,b,L^{\rightarrow}} \in rq(\mathcal{Q}^{\leftrightarrow},\database)$ so \textit{Fugitive} must satisfy this request with $R(w)[a,b]$ for some $w \in L$ and he loses, since $L \subseteq Q_{0}$.
\end{proof}


\begin{figure*}

\includegraphics[width=1.0\textwidth]{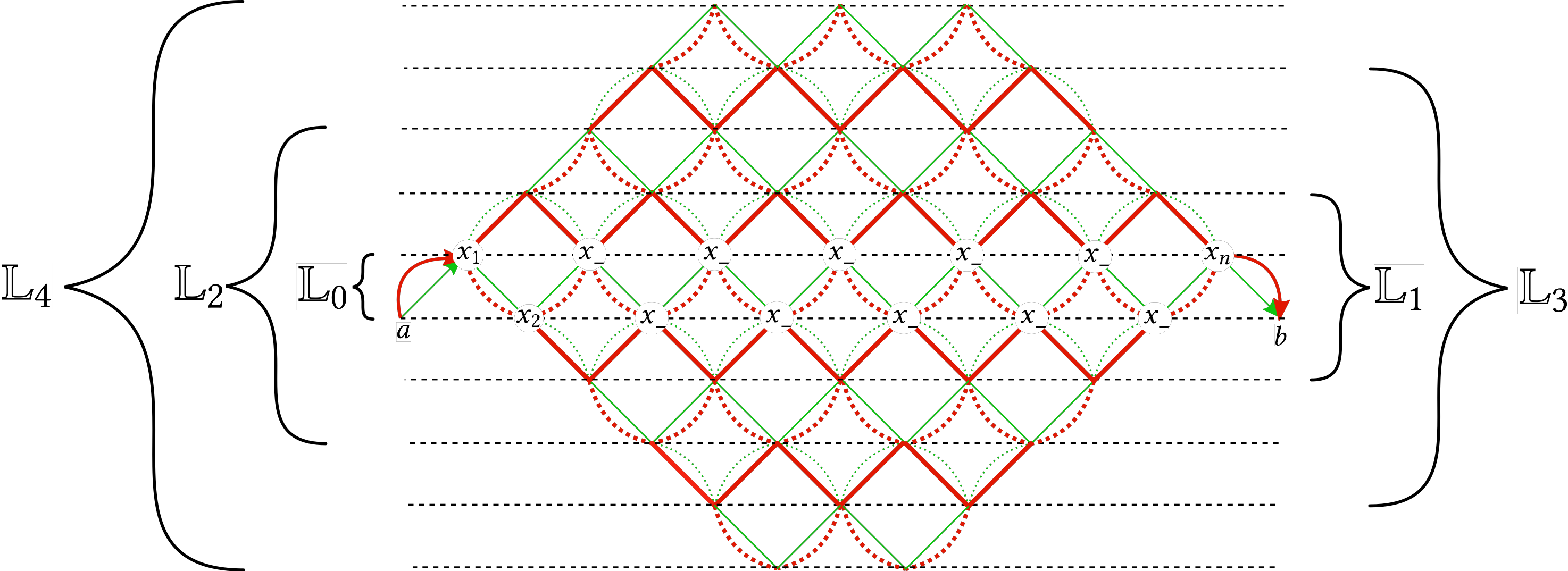}
\caption{\label{fig:fig2} Five first Layers of ${\mathbb G}_m$ with $m=6$.}
\end{figure*}

\section{Now we do not see the shades}\label{straszna}

As we already said, now we are going to watch, and analyze, {\em Fugitive's} play in shade filtering glasses. We assume he obeys Principle I, otherwise he would lose. We also assume he obeys Principle II, but wearing our glasses we are not able to tell 
whether any word from $G(\mathcal{Q}_{bad}) \cup  R(\mathcal{Q}_{bad})$ occurs in the current structure. For this reason we cannot use, 
 in our analysis, arguments referring to languages in ${\mathcal Q}_{bad}$. We are however free to use arguments from Principle II, referring to languages in ${\mathcal Q}_{ugly}$.

\begin{lemma}\label{musi-G}
Suppose in his initial move {\em Fugitive} selects  $\database_{0} = G(\alpha [ \CC{A}{\hor}{C}{\empty}{\empty}\CC{B}{\ver}{C}{\empty}{\empty} ]^{m} \omega)[a,b]$ . Then the final position $\mathbb H$  must be equal (from the point of view of a shades-insensitive spectator) to $\mathbb{G}_m$
\end{lemma}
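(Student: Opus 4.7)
The plan is to trace \textit{Fugitive}'s play round by round under the shade-insensitive lens and show that at each decision point, Principle II combined with the ugly languages $Q^1_{ugly}, Q^2_{ugly}$ leaves a unique safe move, whose accumulated effect is precisely $\mathbb{G}_m$. I would identify the vertices of $\database_0$ with the staircase vertices $v_{0,0}, v_{1,0}, v_{1,1}, v_{2,1}, \ldots, v_{m,m}$ of $\mathbb{G}_m$ and proceed by induction on the diagonal-distance parameter $k$, showing that after a bounded number of rounds Fugitive has built exactly the subgraph of $\mathbb{G}_m$ consisting of vertices and edges within distance $k$ of the staircase.

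For the boundary, $\rtgd{G(Q^1_{good})}{R(Q^1_{good})}$ forces a red $\omega$-edge from $v_{m,m}$ to $b$, and $\rtgd{G(Q^2_{good})}{R(Q^2_{good})}$ forces either red $\alpha$ or red $\beta$ from $a$ to $v_{0,0}$. Choosing red $\alpha$ is fatal because the subsequently forced red warm staircase parallels give a red $a$-$b$ path matching $R(Q^1_{ugly})$, so red $\beta$ is the only safe option. For every green cold staircase edge, the singleton constraints $Q^5_{good}$ through $Q^8_{good}$ force a red parallel of matching direction and A/B-type; the cold disjunct is suicidal because it completes, together with the new red $\beta$ and red $\omega$, a red cold $a$-$b$ path matching $R(Q^2_{ugly})$, so Fugitive must pick red warm.

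Next, each consecutive pair of green cold staircase edges triggers $Q^3_{good}$ or $Q^4_{good}$, whose right-hand side offers a cold same-direction 2-path or a warm swapped-direction 2-path through a fresh intermediate vertex. The cold disjunct is again blocked by Principle II, so Fugitive must install the warm swap, creating the off-diagonal vertices $v_{i,j}$ with $|i-j|=1$ and their incident red warm edges. The $R\to G$ instances of $Q^5_{good}$ through $Q^8_{good}$ now force green cold parallels of those new red warm edges (green warm being barred by $G(Q^1_{ugly})$), while the $R\to G$ instances of $Q^3_{good}, Q^4_{good}$ on the newly formed red warm 2-paths $\CC{B}{\hor}{W}{}{}\CC{A}{\ver}{W}{}{}$ and $\CC{A}{\ver}{W}{}{}\CC{B}{\hor}{W}{}{}$ force new green cold swapped 2-paths through brand-new vertices at diagonal distance $2$ from the staircase. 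Iterating this alternation of warm-red and cold-green extensions out to diagonal distance $m$ yields exactly the vertex set $\{v_{i,j} : i,j \in [0,m]\}$ and all edges of $\mathbb{G}_m$.

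The main obstacle will be ruling out spurious structure: I must verify that every request fired during Fugitive's play corresponds, under the canonical identification, to a unique request located at a known grid cell, so no extra vertex or edge appears outside $\mathbb{G}_m$. This reduces to checking that newly added warm-red (resp.\ cold-green) edges never combine with existing adjacent edges to match a cold (resp.\ warm) disjunct of $Q^3_{good}, Q^4_{good}$ in an unintended position; the key is that the disjuncts of $Q^3, Q^4$ strictly separate cold from warm and that the H/V signatures of the swapped disjuncts are incompatible with the temperature-matched parallel-of-staircase 2-paths. A direct check then confirms that $\mathbb{G}_m$ itself satisfies every RC in $\mathcal{Q}^\leftrightarrow$ and, by Principle II, no $\mathcal{Q}_{ugly}^\leftrightarrow$ request ever forms, so the play stabilizes and $\mathbb H = \mathbb{G}_m$ as required.
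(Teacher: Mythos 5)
Your overall plan coincides with the paper's own proof: your induction on diagonal distance from the staircase is exactly the layer decomposition of Lemma~\ref{musi-warstwy}, your forced word-swap for $Q_{good}^{3},\dots,Q_{good}^{8}$ is Lemma~\ref{changeword}, and your stabilization claim is Observation~\ref{ostatnia}. But your justification of the linchpin step --- that the request $\pair{a,x_{1},(\alpha+\beta)^{\rightarrow}}$ must be satisfied with $R(\beta)$ --- is wrong, and the gap is genuine. You claim red $\alpha$ is fatal because ``subsequently forced red warm staircase parallels'' complete $R(Q_{ugly}^{1})$. Nothing forces warm red edges in that branch: the temperature-forcing you invoke everywhere else runs through $Q_{ugly}^{2}=\beta\Sigma_{0}^{\star}\CC{\bullet}{\empty}{C}{\empty}{\empty}\Sigma_{0}^{\star}\omega$, which is inert until a red $\beta$ exists --- exactly what is in question --- while $Q_{ugly}^{1}$ \emph{forbids} red warm edges once red $\alpha$ is present. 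So in the red-$\alpha$ branch Fugitive would simply answer every request cold, and no ugly or bad pattern ever forms; your argument does not touch him. What actually kills this branch is not Principle II at all: satisfying the $Q_{good}^{4}$ requests cold recreates, edge by edge, the red path $\alpha[\CC{A}{\hor}{C}{\empty}{\empty}\CC{B}{\ver}{C}{\empty}{\empty}]^{m}\omega$ from $a$ to $b$, i.e.\ $R(Q_{start})(a,b)$, and $Q_{start}\subseteq Q_{0}$, so Fugitive loses outright. The paper's proof is precisely this dichotomy (some warm satisfaction $\Rightarrow R(Q_{ugly}^{1})(a,b)$; all cold $\Rightarrow R(Q_{start})(a,b)$), and it is the one place where the inclusion $Q_{start}\subseteq Q_{0}$ does any work. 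Without it red $\beta$ is not forced, the correspondence ``red iff warm, green iff cold'' (Lemma~\ref{redWgreenC}) is unavailable, and every later step of your induction loses its justification.

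A second, smaller defect: your per-edge ``suicidal'' arguments assume the offending edge lies on a monochromatic $a$--$b$ path at the moment it is created. In an intermediate position $\database_{j}$ this need not hold. The paper first proves Lemma~\ref{allpaths} (in the final position every vertex is both red- and green-reachable from $x_{1}$ and can reach $x_{n}$ in both colors, with red $\beta$ and red $\omega$ closing the ends) and then runs the losing argument in the final position $\history$, where the $\beta\cdots\omega$ and $\alpha\cdots\omega$ patterns are guaranteed to close. Your sketch needs the same connectivity lemma; otherwise the claim that a single cold red parallel ``completes'' $R(Q_{ugly}^{2})$ is unjustified in early rounds. With these two repairs your argument becomes essentially the paper's proof.
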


To prove Lemma~\ref{musi-G} it is enough to prove the following Lemma:

\begin{lemma}\label{musi-warstwy}
Let $\layer_i$ be like on Figure~\ref{fig:fig2}  and  $\layer^G_i$ and $\layer^R_i$ be parts of $\layer_i$ consisting of (resp.) green and red edges. Then: 

\begin{enumerate}[(i)]
\item $\database_0 = \layer^G_{0}$,
\item $\database_{2i} = \layer^G_{2i} \cup \layer_{2i - 1}$,
\item $\database_{2i + 1} = \layer^R_{2i + 1} \cup \layer_{2i}$.
\end{enumerate}
\end{lemma}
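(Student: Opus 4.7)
The plan is to prove Lemma~\ref{musi-warstwy} by strong induction on the move index $k$, simultaneously establishing (ii) and (iii). The base case (i) is immediate from Principle~I: Fugitive's choice $\database_{0} = G(\alpha[\CC{A}{\hor}{C}{\empty}{\empty}\CC{B}{\ver}{C}{\empty}{\empty}]^{m}\omega)[a,b]$ coincides, by inspection, with the definition of $\layer^{G}_{0}$.

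For the inductive step with $k = 2i+1$, I would first tabulate $\requests(\mathcal{Q}^{\leftrightarrow}, \database_{2i})$ by combining Principle~II (which eliminates all requests generated by $\mathcal{Q}_{bad} \cup \mathcal{Q}_{ugly}$) with the inductive description $\database_{2i} = \layer^{G}_{2i} \cup \layer_{2i-1}$. Since $\layer_{2i-1}$ is already bichromatically complete, only the green edges and green two-edge patterns lying in $\layer^{G}_{2i}$ remain as sources of outstanding requests. I would enumerate, for each of the eight languages of $\mathcal{Q}_{good}$, precisely which pairs $\langle x,y\rangle$ lack a red image: queries 1--2 contribute only the boundary requests at $a$ and $b$; queries 3 and 4 contribute the two-edge requests whose intermediate vertex must live on the anti-diagonal of $\layer_{2i+1}$; queries 5--8 contribute the single-edge requests between adjacent vertices of $\layer^{G}_{2i}$.

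The heart of the argument is showing that Fugitive's only non-losing response is to add exactly $\layer^{R}_{2i+1} \cup \layer^{R}_{2i}$. The critical forcing step concerns queries 3 and 4: the binary choice between the ``cold'' and the ``warm'' two-letter word in the codomain of $L^{\rightarrow}$ must be resolved in favour of the warm word. Otherwise, repeated cold choices, together with the $R(\beta)$ edge at $a$ and the $R(\omega)$ edge at $b$ that queries 2 and 1 force (the alternative $R(\alpha)$ at $a$ would complete a red copy of the initial chain, an $R(Q_{start})$ pattern already in $R(Q_{0})$), would assemble a red path from $a$ to $b$ containing a cold edge, a forbidden $R(Q_{ugly}^{2})$ pattern. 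Once the warm/cold selection is fixed, the requests from queries 5--8 are determined by consistency with the two-letter choices on the adjacent diagonals. The case $k = 2i$ is symmetric, with the roles of the two colors exchanged and $Q_{ugly}^{1}$ playing the role of $Q_{ugly}^{2}$ against the dual propagation.

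The main obstacle I expect is the vertex identification step: each Add call creates formally fresh intermediate vertices, so I must justify that the vertices freshly introduced during move $2i+1$ can be coherently re-labelled with the anti-diagonal vertices $v_{p,q}$ of $\layer^{R}_{2i+1}$. The argument is that for every such anti-diagonal position the two adjacent two-letter requests from queries 3 and 4 share a pair of endpoints that uniquely pins down one intermediate vertex, and the singleton requests of queries 5--8 are then automatically compatible. A secondary, more mechanical obstacle is the bookkeeping at the grid boundary --- the very first and last layers and the corners near $a$ and $b$ --- where a few expected requests are missing or degenerate; these will require a short separate case analysis.
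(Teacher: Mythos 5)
Your skeleton is the paper's own: induction over Fugitive's moves with Principle~I as the base case, a tabulation of the outstanding requests per language of $\mathcal{Q}_{good}$, warm/cold forcing via the ugly languages, and color duality between even and odd moves (this is exactly the chain formed by the paper's $R(\beta)$-forcing lemma, Lemma~\ref{redWgreenC}, Lemma~\ref{changeword} and Observation~\ref{odod}). However, the step you yourself flag as the ``heart of the obstacle'' --- vertex identification --- is resolved incorrectly. Each call of \textsc{Add} introduces its own brand-new intermediate vertices and $\database_{i+1}$ is the plain union of the results; the game has no mechanism for merging fresh vertices across requests, so if two adjacent requests really did have to ``share a pair of endpoints that uniquely pins down one intermediate vertex,'' Fugitive could never build $\mathbb{G}_m$ at all. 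Fortunately no sharing is needed: along the green staircase $x \rightarrow y \rightarrow z \rightarrow u$ the request at $\pair{x,z}$ (from $Q_{good}^{4}$) and the request at $\pair{y,u}$ (from $Q_{good}^{3}$) create two \emph{distinct} anti-diagonal vertices, each new vertex of $\layer^R_{2i+1}$ is the unique fresh vertex of exactly one two-letter request, and the requests generated by $Q_{good}^{5}$--$Q_{good}^{8}$ add single edges between existing vertices only. The identification is therefore a trivial per-request renaming, not the coherence argument you sketch.

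There is a second genuine gap in the forcing step. Your argument that ``repeated cold choices \ldots would assemble a red path from $a$ to $b$ containing a cold edge'' only excludes plays in which many (or all) choices are cold; to force the layer structure you must exclude even a \emph{single} red cold edge anywhere, including ones arising from mixed plays. That requires the connectivity fact the paper isolates as Lemma~\ref{allpaths}: every vertex other than $a$ lies on a red (and a green) path from $x_1$, and every vertex other than $b$ on such paths to $x_n$ --- which holds because every letter of $\Sigma_0$ belongs to some language of $\mathcal{Q}_{good}$, so every edge acquires a parallel path of the opposite color in the same round. With that lemma, one red cold edge already completes a $\beta\,\Sigma_0^{\star}\CC{\bullet}{\empty}{C}{\empty}{\empty}\Sigma_0^{\star}\omega$ pattern; without it, your claim does not follow. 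Relatedly, your exclusion of $R(\alpha)$ at $a$ covers only the all-cold horn (yielding $R(Q_{start})$); the case of $R(\alpha)$ together with at least one warm choice must be killed by $Q_{ugly}^{1}$, which your odd-move analysis never invokes (you reserve it for the even moves, where it is needed as well, against green warm edges). Once these two repairs are made --- and they are precisely the paper's Lemmas~\ref{allpaths}, \ref{redWgreenC} and \ref{changeword} --- your induction closes as in Observation~\ref{odod} and its dual.
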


\noindent Lemma \ref{musi-warstwy} (i) is Principle I restated. 
Next subsections of this Section are devoted to the proof of Lemma \ref{musi-warstwy} (ii) and (iii).

\subsection{General rules for the Fugitive}

Now assume $\database_{0}$ as demanded by Lemma \ref{musi-G} was really selected and denote vertices of this $\database_{0}$ by $a,x_{1}, \dots, x_{n},b$, with $n = 2m+1$ (see Figure 3).

\begin{lemma}
For every final position $\history$ that was built obeying Principles I and II:
\begin{enumerate}
\item Every edge $e \in \history$ labeled with $G(\alpha), R(\alpha), G(\beta)$ or $R(\beta)$ connects $a$ and $x_{1}$.
\item Every edge $e \in \history$ labeled with $G(\omega)$ or $R(\omega)$ connects $x_{n}$ and $b$.
\end{enumerate}
\end{lemma}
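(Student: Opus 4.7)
The plan is to bootstrap from Lemma~\ref{alphabeta}, which already pins down one endpoint of every $\alpha/\beta$-edge (it starts at $a$) and of every $\omega$-edge (it ends at $b$). What remains is to show that the other endpoint is forced to be $x_1$ (resp.\ $x_n$), and the key extra ingredient here is Principle~II: under the running assumption that no request generated by $\mathcal{Q}_{bad}\cup \mathcal{Q}_{ugly}$ is ever allowed to form, every edge added during the play is produced by satisfying a request generated by some language $L\in \mathcal{Q}_{good}$.

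First I would inspect $\mathcal{Q}_{good}$ and observe that only two of its eight languages mention the letters $\alpha$, $\beta$, or $\omega$, namely $Q^1_{good} = \omega$ and $Q^2_{good} = \alpha+\beta$; the remaining six are words over $\Sigma_0$ and therefore contribute no new $\alpha/\beta/\omega$-edges. Both $Q^1_{good}$ and $Q^2_{good}$ consist of single letters, so satisfying a request $\langle u,v,L^\to\rangle$ or $\langle u,v,L^\leftarrow\rangle$ for $L\in\{\omega,\alpha+\beta\}$ adds a single new edge with the \emph{same} endpoints $u,v$ (the procedure \textsc{Add} introduces no intermediate vertices when the chosen word has length one). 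Moreover, such a request can only be present in $\requests(\mathcal{Q}^\leftrightarrow, \mathbb{D})$ if $\mathbb{D}$ already contains an edge of the same letter group between $u$ and $v$.

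The actual argument is then a straightforward induction on the move index $i$, simultaneously proving: (a) every edge of $\database_i$ labeled with $G(\alpha)$, $R(\alpha)$, $G(\beta)$, or $R(\beta)$ connects $a$ and $x_1$; and (b) every edge of $\database_i$ labeled with $G(\omega)$ or $R(\omega)$ connects $x_n$ and $b$. The base case $i=0$ is immediate from the choice of $\database_0 = G(\alpha[\CC{A}{\hor}{C}{}{}\CC{B}{\ver}{C}{}{}]^m \omega)[a,b]$, in which the only $\alpha$-edge is $G(\alpha)(a,x_1)$ and the only $\omega$-edge is $G(\omega)(x_n,b)$. For the step, any newly added $\alpha$ or $\beta$ edge from $u$ to $v$ in $\database_{i+1}$ must come from satisfying a request of the form $\langle u,v,(\alpha+\beta)^\to\rangle$ or $\langle u,v,(\alpha+\beta)^\leftarrow\rangle$, hence some $\alpha/\beta$ edge between $u$ and $v$ already existed in $\database_i$; by the inductive hypothesis $\{u,v\}=\{a,x_1\}$. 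The $\omega$ case is identical.

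The whole argument is essentially bookkeeping — the main (and only) non-trivial point is the observation that the two languages capable of generating $\alpha/\beta/\omega$-edges are each a single letter, so no ``new'' endpoints are ever introduced; Principle~II is used precisely to rule out the languages in $\mathcal{Q}_{bad}\cup \mathcal{Q}_{ugly}$, which do contain $\alpha,\beta,\omega$ in nontrivial positions and whose target words could in principle plant such edges between arbitrary vertex pairs. I do not foresee any real obstacle beyond being explicit that \textsc{Add} with a one-letter word preserves endpoints.
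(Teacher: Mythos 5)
Your proposal is correct and takes essentially the same route as the paper: the paper's own (one-paragraph) proof rests on exactly your key observation, namely that under Principle~II all requests are generated by $\mathcal{Q}_{good}$, whose only words containing $\alpha$, $\beta$ or $\omega$ are one-letter words, so satisfying such requests adds parallel edges between existing endpoints and never creates new vertices. Your explicit induction on the move index merely spells out what the paper leaves implicit, so there is nothing to fix.
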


\begin{proof}
Notice that by Principle II there were no requests formed by either $\mathcal{Q}_{bad}$ or $\mathcal{Q}_{ugly}$ during the  game that led to $\history$. It means that all requests were generated by $\mathcal{Q}_{good}$. But  for every language $L \in \mathcal{Q}_{good}$ for each $w \in L$  if $w$ contains $\alpha, \beta$ or $\omega$ then $w$ is a one letter word, and also all other words of this language contain 
one letter. So satisfying a request involving $\alpha$, $\beta$ or $\omega$ never requires creating new vertices.
\end{proof}

\begin{lemma}\label{allpaths}
For each $y \in \history, y \neq a$ there exist, in $\history$:

\begin{itemize}
\item a red path from $x_{1}$ to $y$,
\item a green path from $x_{1}$ to $y$,
\end{itemize}

For each $y \in \history, y \neq b$ there exist, in $\history$:
\begin{itemize}
\item a red path from $y$ to $x_{n}$,
\item a green path from $y$ to $x_{n}$.
\end{itemize}
\end{lemma}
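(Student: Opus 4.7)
The proof is by strong induction on the step $i(y)$ at which a vertex $y$ first appears in the game; I describe only the red and green paths from $x_1$ to $y$, since the statement about paths to $x_n$ is dual and follows by an entirely symmetric argument. Because Principle II forbids requests generated by $\mathcal{Q}_{bad}\cup\mathcal{Q}_{ugly}$, every vertex created after step $0$ is introduced while satisfying a request from $\mathcal{Q}_{good}$, and among those the only languages whose words have length greater than one are $Q_{good}^{3}$ and $Q_{good}^{4}$. Hence the only way a fresh vertex $y$ enters $\database_i$ is as the middle point of a length-two path $u\to y\to v$ added to satisfy some request $\pair{u,v,L^{\diamond}}$ with $L\in\{Q_{good}^{3},Q_{good}^{4}\}$ and $\diamond\in\{\rightarrow,\leftarrow\}$; both new edges are monochromatic of the same color -- red if $i$ is odd, green if $i$ is even, by the parity remark in the Exercise.

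For the base case $y\in\database_{0}$, the vertex $y$ lies on the initial green chain through $x_1,\ldots,x_n,b$, so a green path from $x_1$ to $y$ is already present in $\database_0$. The red path is built in $\database_1$: the requests generated in $\database_0$ by the one-letter languages $Q_{good}^{1}$, $Q_{good}^{2}$, $Q_{good}^{5}$ and $Q_{good}^{8}$ force, respectively, a red $\omega$-edge from $x_n$ to $b$, a red $\alpha$- or $\beta$-edge from $a$ to $x_1$, red edges parallel to every vertical green edge, and red edges parallel to every horizontal green edge -- and no new vertex is created in the process because all four of these languages consist of single letters. Concatenating these single-letter additions yields a red path $x_1\to x_2\to\cdots\to x_n\to b$ in $\database_1$, which covers every vertex of $\database_0$ other than $a$.

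For the inductive step, let $y$ first appear in step $i\geq 1$ on an added path $u\to y\to v$ whose two new edges have color $c$. By the induction hypothesis a $c$-colored path from $x_1$ to $u$ exists in $\history$, and it extends via the edge $u\to y$ to give a $c$-colored path to $y$. For the opposite color $\bar c$ the key alphabetic observation is that the only letters of $\Sigma_0$ occurring in $Q_{good}^{3}$ and $Q_{good}^{4}$ are $\CC{A}{\hor}{C}{\empty}{\empty}$, $\CC{B}{\ver}{C}{\empty}{\empty}$, $\CC{A}{\ver}{W}{\empty}{\empty}$ and $\CC{B}{\hor}{W}{\empty}{\empty}$, and each of these letters is one of the two alternatives of exactly one of the one-letter languages $Q_{good}^{5}$, $Q_{good}^{6}$, $Q_{good}^{7}$, $Q_{good}^{8}$. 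Consequently the $c$-colored edge $u\to y$ generates a fresh request from the matching one-letter language (in the direction opposite to $\diamond$), and its satisfaction in step $i+1$ inserts a single $\bar c$-colored edge directly from $u$ to $y$; no intermediate vertex appears because the matching language has only single-letter words. Prepending the $\bar c$-colored path from $x_1$ to $u$ supplied by the induction hypothesis produces the required $\bar c$-colored path to $y$ in $\database_{i+1}\subseteq\history$.

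The main obstacle is precisely the alphabetic alignment invoked in the inductive step: once one verifies that every letter used in an intermediate position by $Q_{good}^{3}$ and $Q_{good}^{4}$ is covered by one of the one-letter languages $Q_{good}^{5}$ through $Q_{good}^{8}$, the rest of the induction reduces to straightforward concatenation and the observation that $\database_j\subseteq\database_{j+1}\subseteq\history$ for every $j$. This matching is, of course, the whole reason $\mathcal{Q}_{good}$ was designed with precisely these eight languages.
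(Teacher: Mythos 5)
Your proof is correct and rests on exactly the paper's key observation: every letter of $\Sigma_{0}$ occurring in the two-letter words of $Q_{good}^{3}$ and $Q_{good}^{4}$ (indeed, every letter of $\Sigma_{0}$) belongs to one of the one-letter languages $Q_{good}^{5}$--$Q_{good}^{8}$, so each edge acquires a parallel path of the opposite color, and reachability of every vertex from $x_1$ (dually, to $x_n$) follows by induction on when the vertex is created. The paper compresses this into two sentences, arguing statically in the final position $\history$ via $\history \models \mathcal{Q}_{good}^{\ \leftrightarrow}$ (Lemma \ref{rcsatisfied}) and leaving the creation-time induction implicit, whereas you spell that induction out with explicit step-and-parity bookkeeping -- sound, though the timing details (that the opposite-color edge appears precisely in move $i+1$) are unnecessary once one works in $\history$ directly.
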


\begin{proof}
Notice that for each $c \in \Sigma_{0}$ there exists a language $L \in \mathcal{Q}_{good}$ such that $c \in L$. This means that for all $u,w \in \history$ such that these vertices are endpoints of a green edge $e = (u,w,G(c)), c \in \Sigma_{0}$ there is also a red path connecting $u$ and $w$ $\in \history$ 
(this is since $\history\models {\mathcal Q}_{good}^{\ \leftrightarrow}$ )

Reasoning for red edges is analogous.
\end{proof}

In his first move  \textit{Fugitive} must 
satisfy all the requests in $S_0=rq({\mathcal Q}^\leftrightarrow, \database_{0})$. Notice that (since all the edges of $\database_{0}$ are green and there are no bad or ugly patterns in $\database_{0}$)
all requests in $S_0$ 
are actually generated by RCs in ${\mathcal Q}_{good}^{\ \rightarrow}$.
And one of them is generated by $(Q_{good}^2)^{\ \rightarrow}$. Next lemma does not look 
spectacular, but this is how we get our foot in the door:

\begin{lemma}
Request $req = \pair{a,x_{1},(\alpha + \beta)^{\rightarrow}}$ in $S_{0}$ must be satisfied with $R(\beta)[a,x_{1}]$.
\end{lemma}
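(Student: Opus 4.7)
The plan is to rule out the choice $R(\alpha)[a,x_1]$ by showing that, together with all the other forced satisfactions of requests in $S_0$, it already produces a red word from $Q_0$ connecting $a$ and $b$. Assume toward contradiction that Fugitive satisfies $req$ with $R(\alpha)(a,x_1)$. The same first move must fulfil every other request in $S_0$; in particular, $(Q_{good}^1)^{\rightarrow}$ requires $R(\omega)(x_n,b)$, and for each green edge $G(c)(x_i,x_{i+1})$ of $\database_0$ the matching singleton RC among $(Q_{good}^5)^{\rightarrow},(Q_{good}^6)^{\rightarrow},(Q_{good}^7)^{\rightarrow},(Q_{good}^8)^{\rightarrow}$ forces a red edge between $x_i$ and $x_{i+1}$ carrying the same $A/B$ and $H/V$ coordinates as $c$ but whose temperature is Fugitive's free choice (cold or warm).

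Concatenating these forced red edges along the backbone $a,x_1,\ldots,x_n,b$ yields, inside $\database_1$, a red path from $a$ to $b$ whose label is $\alpha \cdot e_1 e_2 \cdots e_{2m} \cdot \omega$, with $e_{2k-1}\in\{\CC{A}{\hor}{C}{\empty}{\empty},\CC{A}{\hor}{W}{\empty}{\empty}\}$ and $e_{2k}\in\{\CC{B}{\ver}{C}{\empty}{\empty},\CC{B}{\ver}{W}{\empty}{\empty}\}$. If every $e_i$ is cold, this label equals $\alpha(\CC{A}{\hor}{C}{\empty}{\empty}\CC{B}{\ver}{C}{\empty}{\empty})^m\omega$, a word in $Q_{start}\subseteq Q_0$, so $\database_1\models R(Q_0)(a,b)$ and Fugitive loses. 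If at least one $e_i$ is warm, the label matches the pattern $\alpha\Sigma_{0}^{\star}\CC{\bullet}{\empty}{W}{\empty}{\empty}\Sigma_{0}^{\star}\omega = Q_{ugly}^1\subseteq Q_0$, and Fugitive again loses. Either way the supposition collapses, so $req$ can only be satisfied by $R(\beta)[a,x_1]$.

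There is no serious obstacle here; the single point worth stressing is that the forced red backbone comes already from the length-one RCs $(Q_{good}^5)^{\rightarrow}$--$(Q_{good}^8)^{\rightarrow}$ alone -- the length-two RCs $(Q_{good}^3)^{\rightarrow},(Q_{good}^4)^{\rightarrow}$, together with the new vertices they introduce, play no role in this lemma. Once that backbone is in place, the cold/warm dichotomy funnels the resulting red path into exactly one of the two fatal languages $Q_{start}$ and $Q_{ugly}^1$, which is precisely why the construction of $Q_0$ puts these two patterns side by side.
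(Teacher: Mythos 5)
Your argument is correct, and its skeleton --- assume $R(\alpha)[a,x_1]$, observe that the remaining requests of $S_0$ force a complete red path from $a$ to $b$, then split on temperatures so that the all-cold case lands in $R(Q_{start})$ and any warm letter lands in $R(Q_{ugly}^{1})$, both sublanguages of $Q_0$ --- is exactly the paper's. Where you genuinely diverge is the source of the forced red backbone: the paper builds it solely from the requests generated by $(Q_{good}^{4})^{\rightarrow}$, one per green two-path $G(\CC{A}{\hor}{C}{\empty}{\empty}\CC{B}{\ver}{C}{\empty}{\empty})(x_{2k-1},x_{2k+1})$, each of which must be answered by a red two-edge block, through a fresh middle vertex, labeled either $\CC{A}{\hor}{C}{\empty}{\empty}\CC{B}{\ver}{C}{\empty}{\empty}$ or $\CC{A}{\ver}{W}{\empty}{\empty}\CC{B}{\hor}{W}{\empty}{\empty}$; you instead use the length-one RCs $(Q_{good}^{5})^{\rightarrow}$ and $(Q_{good}^{8})^{\rightarrow}$, which pin the red edges to the original vertices $x_1,\ldots,x_n$ and leave Fugitive only a per-edge temperature choice. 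Both proofs also need the red $\omega(x_n,b)$ forced by $(Q_{good}^{1})^{\rightarrow}$ --- you make this explicit, while the paper leaves it implicit. Your route is slightly more elementary, since no new vertices appear and the concatenation along the backbone is transparent; the paper's reliance on $(Q_{good}^{4})^{\rightarrow}$ has the advantage of rehearsing precisely the forcing pattern (Lemma \ref{changeword}, Observation \ref{odod}) that later drives the layer-by-layer reconstruction of ${\mathbb G}_m$, so its version of the lemma doubles as a first instance of the machinery used throughout the correctness proof. Two cosmetic slips on your side, neither of them a gap: the languages $Q_{good}^{5}$--$Q_{good}^{8}$ are not singletons but two-word languages whose words have length one (your own ``free temperature choice'' shows you are aware of this), and of the four only $(Q_{good}^{5})^{\rightarrow}$ and $(Q_{good}^{8})^{\rightarrow}$ actually generate requests in $S_0$, since $\database_0$ contains no edges matching $\CC{B}{\hor}{C}{\empty}{\empty}$ or $\CC{A}{\ver}{C}{\empty}{\empty}$.
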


\begin{proof}
First notice that there are numerous requests in $S_0$ generated by $Q_{good}^{4}$, all of them of the form $\pair{x_{i},x_{i+2},Q_{good}^{4 \ \rightarrow}}$. Each of them can potentially be satisfied in one of two  ways: either by adding a new path labeled  with a word $R(\CC{A}{\ver}{W}{\empty}{\empty}\CC{B}{\hor}{W}{\empty}{\empty})$ from $x_{i},x_{i+2}$ or by adding a new path labeled with $R(\CC{A}{\hor}{C}{\empty}{\empty}\CC{B}{\ver}{C}{\empty}{\empty})$.

Consider what would happen if {\em Fugitive} tried to satisfy $req$ with $R(\alpha)$ instead of $R(\beta)$. First assume that there exists $req \in S_0$ generated by $Q_{good}^{4}$ that is satisfied with $R(\CC{A}{\ver}{W}{\empty}{\empty}\CC{B}{\hor}{W}{\empty}{\empty})$. Then $\database_{1} \models R(Q_{ugly}^{1})(a,b)$ and this is forbidden by Principle II. So all requests in $S_0$ generated by $Q_{good}^{4}$ must be satisfied with $R(\CC{A}{\hor}{C}{\empty}{\empty}\CC{B}{\ver}{C}{\empty}{\empty})$. But then $\database_{1} \models R(Q_{start})(a,b)$ and  \textit{Fugitive} loses.
\end{proof}

Now we know that, alongside the green $\alpha$, there must exist the red $\beta$ leading to $x_1$
(see Figure 2). From this we get that:

\begin{lemma}\label{redWgreenC}
If \ $\history$ is a final position  that was built obeying Principles I and II (which started with $\database_{0}$) then: for each edge $e \in \history$,
\begin{enumerate}
\item $e$ is labeled with $c \in R(\Sigma_{0}) \Leftrightarrow c \in R\CC{\bullet}{\empty}{W}{\empty}{\empty}$
\item $e$ is labeled with $c \in G(\Sigma_{0}) \Leftrightarrow c \in G\CC{\bullet}{\empty}{C}{\empty}{\empty}$
\end{enumerate}
\end{lemma}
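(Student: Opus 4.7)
The plan is to derive a contradiction for each of the two forbidden situations: a red cold $\Sigma_0$-edge, or a green warm $\Sigma_0$-edge. The pattern templates in $\mathcal{Q}_{ugly}$ are exactly designed to catch such edges via Principle II, so the argument essentially just needs to splice the offending edge into a long path from $a$ to $b$ and invoke that Principle.

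First I would collect the basic "boundary" data. From $\database_0$ and $Q_{good}^1$ the structure contains both $G(\omega)(x_n,b)$ and $R(\omega)(x_n,b)$; from the previous lemma it also contains $G(\alpha)(a,x_1)$ and $R(\beta)(a,x_1)$. I would then check that any $\Sigma_0$-labeled edge $(u,w,c)$ appearing in $\history$ satisfies $u,w \notin \{a,b\}$: the edges incident to $a$ or $b$ in $\database_0$ are labeled $\alpha,\beta,\omega$, and no $\mathcal{Q}_{good}$ request can place a $\Sigma_0$-label on an edge touching $a$ or $b$ (the requests from $Q_{good}^1, Q_{good}^2$ concern only $\alpha,\beta,\omega$-edges, and those from $Q_{good}^{3\text{-}8}$ fire only on configurations of $\Sigma_0$-edges).

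Next, the core step. Suppose some edge of $\history$ is labeled by $c \in R\CC{\bullet}{\empty}{C}{\empty}{\empty}$, i.e.\ red cold. Say it is $(u,w,c)$ with $u,w\notin\{a,b\}$. By Lemma~\ref{allpaths} there is a red path $\pi_1$ from $x_1$ to $u$ and a red path $\pi_2$ from $w$ to $x_n$; since neither $a$ nor $b$ has any outgoing/incoming edges beyond the boundary ones, every edge on $\pi_1$ and $\pi_2$ carries a $\Sigma_0$-label. Concatenating
$$R(\beta)(a,x_1)\cdot \pi_1 \cdot R(c)(u,w)\cdot \pi_2 \cdot R(\omega)(x_n,b)$$
produces a red path from $a$ to $b$ whose label lies in $R(\beta\,\Sigma_0^*\,\CC{\bullet}{\empty}{C}{\empty}{\empty}\,\Sigma_0^* \omega)=R(Q_{ugly}^2)$. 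Then $\history \models R(Q_0)(a,b)$, contradicting Principle~II (or, equivalently, Fugitive would have already lost). The symmetric case of a green warm edge uses $G(\alpha)(a,x_1)$, the two green paths from Lemma~\ref{allpaths}, and $G(\omega)(x_n,b)$ to build a green path matching $G(Q_{ugly}^1)$; this forms the request $\pair{a,b,(Q_{ugly}^1)^\rightarrow}$, again forbidden by Principle~II.

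Hence no $\Sigma_0$-edge of $\history$ is red cold and none is green warm. Since every $\Sigma_0$-edge of $\history$ is necessarily either red or green and (by the shape of the symbols in $\Sigma_0$) either warm or cold, this forces red $\Sigma_0$-edges to be warm and green $\Sigma_0$-edges to be cold, giving both biconditionals in the statement. I do not expect any real obstacle: the only non-routine part is verifying that the red and green paths supplied by Lemma~\ref{allpaths} have all their labels in $\Sigma_0$, which is immediate once one notes that $a$ has no incoming and $b$ no outgoing edges.
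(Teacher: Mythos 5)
Your proposal is correct and follows essentially the same route as the paper's own proof: assume a red edge labeled in $R\CC{\bullet}{\empty}{C}{\empty}{\empty}$ (resp.\ green in $G\CC{\bullet}{\empty}{W}{\empty}{\empty}$), use Lemma~\ref{allpaths} to supply red (resp.\ green) $\Sigma_0$-paths from $x_1$ and to $x_n$, splice in the boundary edges $R(\beta)$, $R(\omega)$ (resp.\ $G(\alpha)$, $G(\omega)$) to exhibit a path matching $R(Q_{ugly}^2)$ (resp.\ $G(Q_{ugly}^1)$), and contradict Principle~II. The only difference is that you spell out details the paper leaves implicit (that the intermediate paths avoid $a,b$ and hence carry only $\Sigma_0$-labels), which is a harmless and slightly more careful elaboration of the same argument.
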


\begin{proof}
(1) Assume by contradiction that there exists a red edge $e\in \history$, from some $x$ to some $x'$, labeled with $c \in R\CC{\bullet}{\empty}{C}{\empty}{\empty}$. By Lemma~\ref{allpaths} there is a path, consisting of edges from 
$R(\Sigma_0)$, from $x_1$ to $x$ and another such path from $x'$ to $x_n$. 
This implies that $\history \models Q_{ugly}^{2}(a,b)$ which is forbidden by Principle II.
(2) Like (1) but then $\history \models Q_{ugly}^{1}(a,b)$.
\end{proof}

Notice that each $Q_{good}^{i}$ for $i = 3 \dots 8$ consists of two words (from the point of view of a shades-insensitive spectator). This sounds like good news for {\em Fugitive}: when satisfying requests 
generated by these languages he has some choice. But actually he does not, as the next lemma tells us:

\begin{lemma}\label{changeword}
Let $i\in \{3 \dots 8\}$ and let $Q_{good}^{i} = \{w_{i},w_{i}'\}$.
\begin{enumerate}
\item  If  $\database_{j} \models G(w_{i})(x,y)$, for some $j$, and $\database_{j} \not\models R(Q_{good}^{i})(x,y)$ then $\pair{x,y,Q_{good}^{i \ \rightarrow}} \in rq(Q_{good}^{i \ \rightarrow},\database_{j})$ and the Fugitive must satisfy this request with $R(w_{i}')[x,y]$.
\item  If  $\database_{j} \models R(w_{i})(x,y)$, for some $j$, and $\database_{j} \not\models G(Q_{good}^{i})(x,y)$ then $\pair{x,y,Q_{good}^{i \ \leftarrow}} \in rq(Q_{good}^{i \ \leftarrow},\database_{j})$ and the Fugitive must satisfy this request with $G(w_{i}')[x,y]$.
\end{enumerate}
\end{lemma}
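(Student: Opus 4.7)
The plan is to leverage the temperature dichotomy built into $Q_{good}^3,\ldots,Q_{good}^8$ and combine it with Lemma~\ref{redWgreenC} to pin down Fugitive's only non-losing choice. I work out part (1); part (2) is identical after swapping the roles of red and green (and $\rightarrow$ with $\leftarrow$).

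The first step is routine: a direct inspection of $Q_{good}^3,\ldots,Q_{good}^8$ shows that each such language is, modulo shades, a two-element set consisting of one ``Warm'' word (whose letters all lie in $\CC{\bullet}{\empty}{W}{\empty}{\empty}$) and one ``Cold'' word (whose letters all lie in $\CC{\bullet}{\empty}{C}{\empty}{\empty}$). The hypothesis $\database_j \models G(w_i)(x,y)$ together with $\database_j \not\models R(Q_{good}^i)(x,y)$ immediately gives $\pair{x,y,Q_{good}^{i\rightarrow}} \in rq(Q_{good}^{i\rightarrow},\database_j)$, so Fugitive must satisfy this request by adding $R(w)[x,y]$ for some $w \in \{w_i, w_i'\}$.

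The crux is ruling out the choice $w = w_i$. Because $\database_j$ contains the green path $G(w_i)$ and Lemma~\ref{redWgreenC} asserts that in every final position (built obeying Principles I and II) all green edges are Cold, the word $w_i$ must be the Cold member of the pair and $w_i'$ the Warm one. If Fugitive picked $w = w_i$ he would insert a fresh batch of red Cold edges into $\database_{j+1}$. Since edges in the game of Escape are only ever added and never removed, these red Cold edges persist into the eventual final position $\history$; by Lemma~\ref{redWgreenC}, however, $\history$ contains no red Cold edges --- contradiction. So Principle II has to be violated somewhere between $\database_{j+1}$ and $\history$, meaning a request generated by some language in $\mathcal{Q}_{ugly}$ unavoidably forms and Fugitive loses. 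The only remaining option is $w = w_i'$, as claimed.

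The step I expect to be the main obstacle is this last one, because Lemma~\ref{redWgreenC} is stated only for final positions, not for intermediate ones. The resolution is precisely the monotonicity of Escape: once a forbidden edge is introduced it cannot be undone, so the terminal structure must violate the invariant of Lemma~\ref{redWgreenC}, forcing Fugitive's loss somewhere on the way. With this observation both halves of the lemma follow, completing the proof.
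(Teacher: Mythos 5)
Your proof is correct and takes essentially the same route as the paper: the paper's own argument also reduces the claim to Lemma~\ref{redWgreenC}, observing that satisfying the request with $R(w_i)$ would make the same letter $c\in\Sigma_0$ appear on both a green and a red edge, which (edges never being removed) contradicts the red-Warm/green-Cold invariant of any final position. Your only deviations are cosmetic --- you first identify $w_i$ as the Cold word via its green occurrence, whereas the paper derives the contradiction directly from one letter carrying both colors, and you spell out the monotonicity step (intermediate position to final position) that the paper leaves implicit.
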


\begin{proof}
(1) Let $i \in \{3,\dots,8\}$ and let $j$ be such that $\database_{j} \models G(w_{i})(x,y)$ and $\database_{j} \not\models R(Q_{good}^{i})(x,y)$. Assume by contradiction that \textit{Fugitive} satisfies $\pair{x,y,Q_{good}^{i \ \rightarrow}}$ with $R(w_{i})[x,y]$. Then $\database_{j+1} \models G(w_{i})(x,y)$ and $\database_{j+1} \models R(w_{i})(x,y)$. Let $c$ be any letter of $w_{i}$ (notice that $c \in \Sigma_{0}$). We have that there exist vertices $u,w,p,q \in \database_{j+1}$ such that $\database_{j+1} \models G(c)(u,w)$ and $\database_{j+1} \models R(c)(p,q)$ and this  contradicts Lemma \ref{redWgreenC}.
(2) Analogous to the proof of (1).
\end{proof}

Now, in Section \ref{nieparzysty} we
 assume that $\database_{2i} = \layer^G_{2i} \cup \layer_{2i - 1}$ and show that $\database_{2i + 1}$ is as claimed in Lemma~\ref{musi-warstwy} (ii) and  in Section \ref{parzysty} we assume that $\database_{2i + 1} = \layer^R_{2i + 1} \cup \layer_{2i}$ and show that $\database_{2i + 2}$ is as claimed in Lemma~\ref{musi-warstwy} (iii).
\subsection{Fugitive's move 2$i$: from $\database_{2i}$ to  $\database_{2i+1}$}\label{nieparzysty}

\begin{observation}\label{odod}
For $\database_{2i}$ it is true that:
\begin{enumerate}[(1)]
	\item All requests in $\database_{2i}$ generated by $Q_{good}^{4}$ must be satisfied with $R(\CC{A}{\ver}{W}{\empty}{\empty}\CC{B}{\hor}{W}{\empty}{\empty})$.
	\item All request in $\database_{2i}$ generated by $Q_{good}^{3}$ must be satisfied with $R(\CC{B}{\hor}{W}{\empty}{\empty}\CC{A}{\ver}{W}{\empty}{\empty})$
	\item All requests in $\database_{2i}$ generated by $Q_{good}^{5}$ must be satisfied with $R\CC{B}{\ver}{W}{\empty}{\empty}$.
	\item All requests in $\database_{2i}$ generated by $Q_{good}^{8}$ must be satisfied with $R\CC{A}{\hor}{W}{\empty}{\empty}$.
\end{enumerate}
\end{observation}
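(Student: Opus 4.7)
My plan is to derive all four items by a uniform appeal to Lemma~\ref{redWgreenC}. The position $\database_{2i}$ arises from $\database_{2i-1}$ by satisfying all outstanding requests, and during that move only green edges are added; consequently every request still alive in $\database_{2i}$ is green-generated, i.e.\ of the form $\langle x,y,(Q_{good}^i)^\rightarrow\rangle$ with $\database_{2i}\models G(w)(x,y)$ for some $w\in Q_{good}^i$. To satisfy such a request Fugitive must invoke the \textsc{Add} procedure, producing a fresh red copy $R(w')[x,y]$ for some $w'\in Q_{good}^i$; his only freedom is the choice of $w'$.

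Suppose Fugitive were to pick $w'$ containing a Cold letter $c\in\CC{\bullet}{\empty}{C}{\empty}{\empty}$. Then $\database_{2i+1}$, and hence also the eventual final position $\history\supseteq\database_{2i+1}$, would carry a red $R(c)$-edge. But a play that respects Principles~I and~II produces a final position to which Lemma~\ref{redWgreenC} applies, so every red edge of $\history$ has to be Warm -- a contradiction. Hence $w'$ must be an all-Warm word of $Q_{good}^i$, and such a word is always a legal choice (it is literally a member of the language), so this constrained option is actually playable.

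It is then a matter of inspection to single out the unique all-Warm word in each of the four languages in question: $\CC{A}{\ver}{W}{\empty}{\empty}\CC{B}{\hor}{W}{\empty}{\empty}$ in $Q_{good}^4$, $\CC{B}{\hor}{W}{\empty}{\empty}\CC{A}{\ver}{W}{\empty}{\empty}$ in $Q_{good}^3$, $\CC{B}{\ver}{W}{\empty}{\empty}$ in $Q_{good}^5$, and $\CC{A}{\hor}{W}{\empty}{\empty}$ in $Q_{good}^8$. These are exactly the four responses listed in the observation, so each of the four items follows. I do not anticipate any real obstacle here: once Lemma~\ref{redWgreenC} is in hand, the observation is a routine four-line case analysis in which the temperature constraint on red edges prunes Fugitive's menu down to a single option in every case.
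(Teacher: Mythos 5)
Your proof is correct, and it runs on the same engine as the paper's: the temperature dichotomy of Lemma~\ref{redWgreenC}. The difference is only one of factoring. The paper does not argue from Lemma~\ref{redWgreenC} directly; it first notes, using the inductive hypothesis $\database_{2i}=\layer^G_{2i}\cup\layer_{2i-1}$, that every request generated by $Q_{good}^{4}$ in $\database_{2i}$ is witnessed by the green all-Cold word $G(\CC{A}{\hor}{C}{\empty}{\empty}\CC{B}{\ver}{C}{\empty}{\empty})$, and then invokes Lemma~\ref{changeword}, which says a request witnessed by $G(w_i)$ must be answered with the \emph{other} word $R(w_i')$ --- and Lemma~\ref{changeword} is itself proved by exactly your argument (answering with the same word would place $G(c)$ and $R(c)$ in the structure for one and the same letter $c$, contradicting Lemma~\ref{redWgreenC}). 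You inline that step and conclude instead that the red answer must be the unique all-Warm word of the language; the two conclusions coincide because each of $Q_{good}^{3}$, $Q_{good}^{4}$, $Q_{good}^{5}$, $Q_{good}^{8}$ consists, shade-insensitively, of exactly one all-Cold and one all-Warm word. Your route is marginally more self-contained: it never needs to identify which green word generated the request, only the parity fact that requests alive in $\database_{2i}$ are green-generated (the exercise in Section~\ref{game}, or the inductive hypothesis), whereas the paper's factoring through Lemma~\ref{changeword} buys reusability --- the same lemma is applied verbatim to the symmetric odd-move observation in Section~\ref{parzysty}. One point you handle correctly and should keep explicit: Lemma~\ref{redWgreenC} speaks only of final positions, so applying it at an intermediate move needs the remark that edges are never removed, hence a red Cold edge added in move $2i+1$ would survive into $\history\supseteq\database_{2i+1}$; you say precisely this, so there is no gap.
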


\begin{proof}
For (1). By hypothesis all requests that are generated by $Q_{good}^{4}$ in $\database_{2i}$ are of the form $\pair{x,y,G(\CC{A}{\hor}{C}{\empty}{\empty}\CC{B}{\ver}{C}{\empty}{\empty}) \rightarrow R(Q_{good}^{4})}$ (Note that $\CC{A}{\hor}{C}{\empty}{\empty}\CC{B}{\ver}{C}{\empty}{\empty} \in Q_{good}^4$). By Lemma \ref{changeword} \textit{Fugitive} must satisfy all such requests with $R(\CC{A}{\ver}{W}{\empty}{\empty}\CC{B}{\hor}{W}{\empty}{\empty})$. Rest of the proofs for (2)-(4) are analogous.
\end{proof}

\subsection{Fugitive's  move $2i+1$: from $\database_{2i+1}$ to  $\database_{2i+2}$}\label{parzysty}
%
%
%

Proof of the following Observation is analogous to the one of Observation~\ref{odod}.

\begin{observation}
For $\database_{2i+1}$ it is true that:
\begin{enumerate}
\item All requests in $\database_{2i+1}$ generated by $Q_{good}^{4}$ must be satisfied with $G(\CC{A}{\hor}{C}{\empty}{\empty}\CC{B}{\ver}{C}{\empty}{\empty})$.
\item All request in $\database_{2i+1}$ generated by $Q_{good}^{3}$ must be satisfied with $G(\CC{B}{\ver}{C}{\empty}{\empty}\CC{A}{\hor}{C}{\empty}{\empty})$
\item All requests in $\database_{2i+1}$ generated by $Q_{good}^{7}$ must be satisfied with $G\CC{A}{\ver}{C}{\empty}{\empty}$.
\item All requests in $\database_{2i+1}$ generated by $Q_{good}^{6}$ must be satisfied with $G\CC{B}{\hor}{C}{\empty}{\empty}$.
\end{enumerate}
\end{observation}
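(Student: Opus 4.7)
The plan is to mirror the proof of Observation~\ref{odod} with the roles of the two colors swapped. By the inductive hypothesis behind Lemma~\ref{musi-warstwy}, one has $\database_{2i+1}=\layer^{R}_{2i+1}\cup\layer_{2i}$, so the fresh edges appearing in the move just completed are all red and, by Lemma~\ref{redWgreenC}, carry labels in $R\CC{\bullet}{\empty}{W}{\empty}{\empty}$. Combined with the Exercise following the definition of \emph{Escape}, which tells us that on an odd-indexed step all requests from $rq(\mathcal{Q}^{\leftrightarrow},\database_{2i+1})$ belong to $\mathcal{Q}^{\leftarrow}$, this means every request we need to analyse originates from a red witness and demands a green completion. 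That restriction is what makes the four items collapse into essentially forced moves.

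For items (1) and (2) I would just invoke Lemma~\ref{changeword}(2). Recall $Q_{good}^{4}=\{\CC{A}{\hor}{C}{\empty}{\empty}\CC{B}{\ver}{C}{\empty}{\empty},\,\CC{A}{\ver}{W}{\empty}{\empty}\CC{B}{\hor}{W}{\empty}{\empty}\}$; the first word is cold and therefore, by Lemma~\ref{redWgreenC}, cannot occur in red. Hence every $Q_{good}^{4}$-request in $\database_{2i+1}$ is witnessed by a red copy of $\CC{A}{\ver}{W}{\empty}{\empty}\CC{B}{\hor}{W}{\empty}{\empty}$, and Lemma~\ref{changeword}(2) forces the Fugitive to satisfy it with $G(\CC{A}{\hor}{C}{\empty}{\empty}\CC{B}{\ver}{C}{\empty}{\empty})$. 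Item (2) is the identical statement with $Q_{good}^{3}$ in place of $Q_{good}^{4}$ and the two words of that language swapped.

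For items (3) and (4) the argument is even shorter. The language $Q_{good}^{7}=\{\CC{A}{\ver}{W}{\empty}{\empty},\,\CC{A}{\ver}{C}{\empty}{\empty}\}$ consists of single-letter words, so any red witness of a $Q_{good}^{7}$-request is a single red edge; by Lemma~\ref{redWgreenC} it must carry label $R\CC{A}{\ver}{W}{\empty}{\empty}$. A priori the Fugitive could satisfy the resulting request with either $G\CC{A}{\ver}{W}{\empty}{\empty}$ or $G\CC{A}{\ver}{C}{\empty}{\empty}$, but the former would produce a Warm green edge and is ruled out by Lemma~\ref{redWgreenC}. Hence $G\CC{A}{\ver}{C}{\empty}{\empty}$ is forced, and item (4) follows identically with $Q_{good}^{6}$ replacing $Q_{good}^{7}$.

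The only subtlety, and the point where one has to be careful, is matching the \emph{direction} of the regular constraint to the colour of the witness: one must check that no request being satisfied here could instead come from a green witness that has been around since an earlier layer, for otherwise Lemma~\ref{changeword}(2) would not apply. This is handled exactly by the parity remark from the Exercise together with Lemma~\ref{redWgreenC}: on this step every relevant witness is a newly added red warm edge (or concatenation of two such edges), so every case reduces to the one-line analysis above, just as in Observation~\ref{odod}.
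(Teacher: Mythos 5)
Your proposal is correct and takes essentially the same approach as the paper: the paper merely states that the proof is analogous to that of Observation~\ref{odod}, and your color-swapped argument --- red warm witnesses forced by Lemma~\ref{redWgreenC}, then Lemma~\ref{changeword}(2) forcing the opposite (cold, green) word --- is exactly that analogy spelled out. Your closing remark about matching constraint direction to witness color via the parity Exercise is a sound elaboration of a point the paper leaves implicit.
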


\subsection{The end. No more requests!}

Now it is straightforward to verify that:

\begin{observation}\label{ostatnia}
All requests generated by $\mathcal{Q}_{good}$ are already satisfied in $\database_{m+1} = \mathbb{G}_{m}$.
\end{observation}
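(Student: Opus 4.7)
The plan is to verify the observation by going through each of the eight languages in $\mathcal{Q}_{good}$ in turn, and checking that for every pair of vertices $(x,y)$ in $\mathbb{G}_m$, if $G(L)(x,y)$ holds then $R(L)(x,y)$ holds, and vice versa. The proof is purely a bookkeeping exercise based on the explicit description of $\mathbb{G}_m$, and the key structural fact to exploit is that in $\mathbb{G}_m$ every green edge is cold and every red edge is warm (Lemma~\ref{redWgreenC}), while the $A/B$ and $H/V$ tags are determined by position/direction.

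First I would dispose of the boundary languages $Q_{good}^1=\{\omega\}$ and $Q_{good}^2=\{\alpha,\beta\}$. By construction the edges $G(\omega)$ and $R(\omega)$ both connect $v_{m,m}$ to $b$, and the edges $G(\alpha)$ and $R(\beta)$ both connect $a$ to $v_{0,0}$; in each case both $G(L)$ and $R(L)$ are realised between the same pair, so no request is generated. Next I would handle the four single-letter languages $Q_{good}^5,\ldots,Q_{good}^8$. Each of these has the form $\CC{p}{q}{C}{\empty}{\empty}+\CC{p}{q}{W}{\empty}{\empty}$ for fixed $p\in\{A,B\}$ and $q\in\{H,V\}$. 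By the construction of $\mathbb{G}_m$, between any two $q$-adjacent grid vertices $v_{i,j},v_{i,j+1}$ (resp.\ $v_{i+1,j}$) that carry the $p$-label (determined by parity of $i+j$), we have both a green cold edge with label $\CC{p}{q}{C}{\empty}{\empty}$ and a red warm edge with label $\CC{p}{q}{W}{\empty}{\empty}$; thus $G(L)(x,y)$ and $R(L)(x,y)$ coincide on the same pairs.

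The only non-trivial cases are the two-letter languages $Q_{good}^3$ and $Q_{good}^4$, but thanks to Lemma~\ref{redWgreenC} they reduce to diagonal "diamond" checks. For $Q_{good}^3$, the word $\CC{B}{\ver}{C}{\empty}{\empty}\CC{A}{\hor}{C}{\empty}{\empty}$ can only be realised in green (since its letters are cold) and the word $\CC{B}{\hor}{W}{\empty}{\empty}\CC{A}{\ver}{W}{\empty}{\empty}$ only in red. A green $Q_{good}^3$-path exists between $v_{i,j}$ and $v_{i+1,j+1}$ exactly when $i+j$ is odd and both $v_{i,j+1}$ and $v_{i+1,j+1}$ exist; a red one exists exactly when $i+j$ is odd and both $v_{i+1,j}$ and $v_{i+1,j+1}$ exist. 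Since both side vertices $v_{i,j+1}$ and $v_{i+1,j}$ always exist whenever the corner $v_{i+1,j+1}$ does, these two conditions coincide, so the endpoint pairs witnessed in the two colors are identical. The case $Q_{good}^4$ is symmetric: now $v_{i+1,j+1}$ lies at a diamond whose corner $v_{i,j}$ has $i+j$ even.

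There is no subtle obstacle, only the need to be disciplined about parities and boundary vertices; the main thing I want to make sure of is that no corner case is overlooked (in particular, I would double-check that the very last diamond at $v_{m-1,m-1}$, and the diamonds adjacent to the top and right edges of the grid, behave correctly, i.e.\ that the pair $(v_{i,j},v_{i+1,j+1})$ is always witnessed by both colors whenever it is witnessed by one). Once this enumeration is complete, it follows that $\mathbb{G}_m \models \mathcal{Q}_{good}^{\leftrightarrow}$, which is precisely the statement of Observation~\ref{ostatnia}.
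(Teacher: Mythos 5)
Your verification is correct, and it takes the same route the paper intends: the paper gives no written proof at all (Observation~\ref{ostatnia} is introduced only with ``Now it is straightforward to verify that''), so your case analysis over the eight languages of $\mathcal{Q}_{good}$ --- the boundary letters $\alpha,\beta,\omega$, the four single-letter languages via the cold$=$green / warm$=$red correspondence, and the two-letter languages $Q_{good}^3, Q_{good}^4$ via the parity-and-diamond check showing both colors witness exactly the pairs $(v_{i,j},v_{i+1,j+1})$ --- is precisely the routine enumeration the authors leave to the reader. Nothing is missing; in particular you correctly note that the words of $Q_{good}^3$--$Q_{good}^8$ use only $\Sigma_0$-letters and that the side vertices of a diamond exist whenever its corner does, which are the only points where an oversight could occur.
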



\section{And now we see the shades again}\label{ulga}

Now we are ready to finish the proof of Lemma \ref{poprawnoscredukcji}.

Suppose the {Fugitive's} play ended, in some final position ${\mathbb H}={\mathbb G}_m$. We take off our glasses, and not only we still see this ${\mathbb H}$, but now we see it in full colors, with each edge (apart from edges labeled with $\alpha$, $\beta$ and $\omega$) having one of the shades from $\mathcal S$. Assume that 
the original instance $ \mathcal S, F$ of Our Grid Tiling Problem has no solution, and concentrate on the red edges of  ${\mathbb H}$. They form a square grid, with each vertical edge labeled with $V$, each horizontal edge labeled with $H$, and with each edge labeled with a shade  from $\mathcal S$. So clearly, one of the conditions (b1)-(b3) of Definition \ref{ogtp} is unsatisfied.  
But this implies that a path labeled with a word from one of the languages $Q_{bad}^1$-- $Q_{bad}^3$ occurs in $\mathbb H$, which is in breach of Principle II.  This ends the proof of Lemma \ref{poprawnoscredukcji} (i)$\rightarrow$ (ii).

For the proof
Lemma \ref{poprawnoscredukcji} ($\neg$i)$\rightarrow$ ($\neg$ii) assume the original instance $\langle {\mathcal S}, {\mathcal F}\rangle$ of Our Grid Tiling Problem has a solution -- a labeled grid $m\times m$ for some $m$. Call this grid ${\mathbb G}$.

Recall that 
${\mathbb G}_m$ satisfies all 
regular constraints from ${\mathcal Q}_{good}^\leftrightarrow$ (Observation~\ref{ostatnia}) and  from ${\mathcal Q}_{ugly}^\leftrightarrow $ (for trivial reasons, as no paths from any $G(L)\cup R(L)$ with $L\in {\mathcal Q}_{ugly}$ occur in ${\mathbb G}_m$). 
Now copy the shades of the edges of ${\mathbb G}$ to the respective  edges of ${\mathbb G}_m$. Call this new structure (${\mathbb G}_m$ with shades added) $\mathbb M$. It is easy to see that 
$\mathbb M$ constitutes a counterexample, as in Lemma \ref{lm-det-struct}.\\ 



~

\noindent
{\sc References}\smallskip

\noindent
[AV97] S. Abiteboul and V. Vianu, {\em Regular path queries with constraints;} Proc. of the 16th
PODS, pp. 122--133, 1997;

\noindent
[A11]
 F. N. Afrati, {\em Determinacy and query rewriting for conjunctive
queries and views}; Th.Comp.Sci. 412(11):1005--1021, March
2011;

\noindent
[AG08] R. Angles, C. Gutierrez, {\em Survey of Graph Database Models};
 ACM Comp. Surveys 
Vol. 40 Issue 1, February 2008;

\noindent
[B13]\hspace{0.5mm}P.Barceló, \hspace{-1.5mm}  {\em Querying graph databases. \hspace{-0.7mm} Simple Paths Semantics vs. Arbitrary Path Semantics}; \hspace{-0.7mm}  PODS 2013, pp. \hspace{-1mm}  175-188;

\noindent
[CMW87] I. F. Cruz, A. O. Mendelzon, and P. T. Wood, {\em A graphical query language supporting
recursion};  Proc. of ACM SIGMOD Conf. on Management of Data, 1987;

\noindent
[CGL98] D. Calvanese, G. De Giacomo, and M. Lenzerini, {\em On the decidability of query containment
under constraints}; in Proc. of the 17th  PODS,'' pp. 149--158, 1998;

\noindent
[CGLV00] D. Calvanese, G. De Giacomo, M. Lenzerini,  M.Y. Vardi.
{\em Answering regular path queries using views};  Proc..
16th Int. Conf. on Data Engineering, pages 389--398, IEEE, 2000;

\noindent
[CGLV00a] D. Calvanese, G. De Giacomo, M. Lenzerini,  M. Y. Vardi.
{\em View-based query processing and constraint satisfaction}; Proc. of 15th  IEEE LICS, 2000;

\noindent
[CGLV02] D. Calvanese, G. De Giacomo, M. Lenzerini, M.Y. Vardi.
{\em Lossless regular views};  Proc. of the  21st PODS,  pages 247--258, 2002;

\noindent
[CGLV02a] D. Calvanese, G. De Giacomo, M. Lenzerini, and M.Y. Vardi.
{\em Rewriting of regular expressions and regular path queries}; Journal of Comp. and
System Sc., 64:443--465, 2002;

\noindent
[DPT99]
 A. Deutsch, L. Popa, and Val Tannen, {\em Physical data
independence, constraints, and optimization with universal plans};
Proc. of 25th  VLDB, pages 459--
470, 1999;

\noindent
 [F15] Nadime Francis, PhD thesis, ENS de Cachan, 2015;

\noindent
 [F17]\hspace{0.5mm}N.Francis; {\em 
Asymptotic Determinacy of Path Queries Using Union-of-Paths Views}; Th.Comp.Syst. 61(1):156-190 (2017);

\noindent
[FG12]
 E. Franconi and P. Guagliardo {\em The view update problem
revisited}  CoRR, abs/1211.3016, 2012;

\noindent
[FGZ12]
 Wenfei Fan, F. Geerts, and Lixiao Zheng, {\em View determinacy for
preserving selected information in data transformations}; Inf. Syst.,
37(1):1--12, March 2012;

\noindent
[FLS98] D. Florescu, A. Levy, and D. Suciu, {\em Query containment for conjunctive queries with
regular expressions};  Proc. of the 17th  PODS,'' pp. 139--148, 1998;

\noindent
[FV98] T. Feder and M. Y. Vardi, {\em The computational structure of monotone monadic
SNP and constraint satisfaction: A study through datalog and group theory}; SIAM
Journal on Computing, 28(1):57--104, 1998;

\noindent
[FSS14] N. Francis, L. Segoufin, C. Sirangelo {\em Datalog rewritings of regular
path queries using views};  Proc. of ICDT,  pp 107--118, 2014;

\noindent
 [GB14] M. Guarnieri,
D. Basin, {\em Optimal Security-Aware Query Processing};
Proc. of the VLDB Endowment, 2014;

\noindent
 [GM15] 	T. Gogacz, J. Marcinkowski,{\em 
The Hunt for a Red Spider: Conjunctive Query Determinacy Is Undecidable}; LICS 2015: 281-292;

 \noindent
 [GM16] 	T. Gogacz, J. Marcinkowski, {\em
Red Spider Meets a Rainworm: Conjunctive Query Finite Determinacy is Undecidable}; PODS 2016: 121-134;
 

%
\noindent
[JV09] V. Juge and M. Vardi,  {\em On the containment of Datalog
in Regular Datalog}; Technical report, Rice University,
2009;

%
%
%
\noindent
[LY85]
 Per-Ake Larson and H. Z. Yang,  {\em Computing queries from derived relations};
Proc. of the 11th International Conference on Very Large Data Bases - Volume
11, VLDB'85, pages 259--269. VLDB Endowment, 1985;
%
%
%

%
\noindent
[NSV06]
 A. Nash, L. Segoufin, and V. Vianu, {\em Determinacy and
rewriting of conjunctive queries using views: A progress report};
 Proc. of
ICDT 2007, LNCS vol. 4353;
pp 59--73;

\noindent
[NSV10]
 A. Nash, L. Segoufin, and V. Vianu.
 {\em Views and
queries: Determinacy and rewriting}; ACM Trans. Database Syst.,
35:21:1--21:41, July 2010;

\noindent
[P11]
 D. Pasaila, {\em Conjunctive queries determinacy and rewriting};  Proc. of the 14th  ICDT, pp. 220--231, 2011;

\noindent
[RRV15] J. Reutter, M. Romero,  M. Vardi, {\em Regular queries
on graph databases}; Proc. of the 18th ICDT; pp 177--194; 2015;

 %
\noindent
[V16] M.Y. Vardi, {\em A Theory of Regular Queries};
 PODS/SIGMOD keynote talk; Proc. of the 35th ACM  PODS 2016,
pp 1-9; 
%


\end{document}
